\newtheorem{theorem}{Theorem}[section]
\newtheorem{lemma}[theorem]{Lemma}
\newtheorem{corollary}[theorem]{Corollary}
\newtheorem{proposition}[theorem]{Proposition}
\newtheorem{definition}[theorem]{Definition}
\newtheorem{problem}[theorem]{Problem}
\newtheorem{openproblem}[theorem]{Open Problem}
\newcommand{\algwidth}{0.97\textwidth}
\newcommand{\cA}{\mathcal{A}}
\newcommand{\cB}{\mathcal{B}}
\newcommand{\cE}{\mathcal{E}}
\newcommand{\cX}{\mathcal{X}}
\newcommand{\cY}{\mathcal{Y}}
\newcommand{\cU}{\mathcal{U}}
\newcommand{\cT}{\mathcal{T}}
\newcommand{\cW}{\mathcal{W}}
\newcommand{\E}{{\mathbb{E}}}
\newcommand{\eps}{\varepsilon}
\newcommand{\X}{\mathbf{X}}
\newcommand{\Y}{\mathbf{Y}}
\newcommand{\poly}{{\operatorname{poly}\xspace}}
\newcommand{\Dmu}[3]{{\mathsf{D}_{#2}(#1,#3)}}
\newcommand{\Rmu}[3]{{\mathsf{R}(#1,#2)}}
\newcommand{\ICmu}[3]{{\mathsf{IC}_{#2}(#1,#3)}}
\newcommand{\ICprot}[2]{{\mathsf{IC}_{#2}(#1)}}
\newcommand{\ICprotE}[2]{{\mathsf{IC}^{\mathsf{ext}}_{#2}(#1)}}
\newcommand{\Div}[2]{\mathbb{D} \left (  #1 \| #2 \right) }
\newcommand{\suc}{\mathsf{suc}}
\newcommand{\suci}{\mathsf{suc^i}}
\newcommand\footnoteref[1]{\protected@xdef\@thefnmark{\ref{#1}}\@footnotemark}
\begin{document}

\title{Information Complexity and the Quest for Interactive Compression \\ \vspace{.13in} \small (Survey)}

\author{Omri Weinstein\thanks{Department of Computer Science, Princeton University, 
{\tt oweinste@cs.princeton.edu}. 
Research supported by a Simons award in Theoretical Computer Science, a Siebel scholarship and  NSF Award CCF-1215990.
A version of this survey will appear in the June 2015 issue of SIGACT News complexity column.}}

\sloppy
\date{}
\maketitle

\pagestyle{empty}

\begin{abstract}
Information complexity is the interactive analogue of Shannon's classical information theory. In recent years 
this field has emerged as a powerful tool for proving strong communication lower bounds, and for addressing 
some of the major open problems  in  communication complexity and circuit complexity. 
A notable achievement of information complexity is the 
breakthrough in understanding of  the fundamental direct sum and direct product conjectures,
which aim to quantify the power of parallel computation. 
This survey provides a brief introduction to information complexity, and overviews some of the
recent progress on these conjectures and their tight relationship with the fascinating problem of 
compressing interactive protocols.
\end{abstract}

\section{Introduction}
The holy grail of complexity theory is proving lower bounds on different computational models, thereby delimiting 
computational problems according to the resources they require for solving. 
One of the most useful abstractions for proving such lower bounds is \emph{communication complexity}; 
Since its introduction \cite{Yao79}, this model has had a profound impact on nearly every field of theoretical computer science,
including VLSI chip design, data structures, mechanism design, property testing and streaming algorithms \cite{Thompson79,PW10,DN11,BBM12} to mention a few,
and constitutes one of the few known tools for proving \emph{unconditional} lower bounds. 
As such, developing new tools in communication complexity is a promising approaches for making progress within computational complexity, 
and in particular, for proving strong circuit lower bounds that appear viable (such as Karchmer-Wigderson games and ACC lower bounds \cite{KW90, BeigelT91}). 

Of particular interest are ``black box" techniques for proving lower bounds, also known as ``hardness amplification" methods (which morally enable strong
lower bounds on composite problems via lower bounds on a simpler primitive problem). 
Classical examples of such results are the Parallel Repetition theorem \cite{Raz98, Rao08a} and Yao's XOR Lemma  \cite{Yao82}, both of which are cornerstones 
of complexity theory. This is the principal motivation for studying the \emph{direct sum and direct product} conjectures, which are at the core of this survey.

Perhaps the most notable tool for studying communication problems is information theory, introduced by Shannon in the late 1940Õs in the context of 
(one-way) data transmission problems \cite{Shannon48}. 
Shannon's noiseless coding theorem revealed the tight connection between communication and information, namely, that the amortized description length 
of a random one-way message ($M$) is equivalent to the amount of information it contains
\begin{align} \label{eq_shannon_noiseless}
\lim_{n\longrightarrow \infty} \frac{C(M^n)}{n} = H(M),
\end{align}
where $M^n$ denotes $n$ i.i.d observations from $M$, $C$ is the minimum number of bits of a string from which $M^n$ can be recovered (w.h.p),
and $H(\cdot)$ is Shannon's Entropy function.
In the 65 years that elapsed since then, information theory has 
been widely applied and developed, and has become the primary mathematical tool for analyzing communication problems.

Although classical information theory provides a complete understanding of the one-way transmission setup (where only one party communicates), it does not readily convert to the 
\emph{interactive setup}, such as the (two-party) communication complexity model. 
In this model, two players (Alice and Bob) receive inputs $x$ and $y$ respectively, which are jointly distributed according to some prior distribution $\mu$,
and wish to compute some function $f(x,y)$ while communicating as little as possible. To do so, they engage in a \emph{communication protocol}, and are allowed to
use both public and private randomness.  
A natural extension of Shannon's entropy to the interactive setting is the \emph{Information Complexity} of a function $\ICmu{f}{\mu}{\eps}$, which informally measures the 
average amount of information the players need to disclose 
each other about their inputs in order to solve $f$ with some prescribed error under the input distribution $\mu$. 
From this perspective, communication complexity can be viewed as the extension of transmission problems to general tasks performed by two parties over a 
noiseless channel (the noisy case recently received a lot of attention as well \cite{BraICM14}). 
Interestingly, it turns out that an analogue of Shannon's theorem does in fact hold for interactive computation,
asserting that the amortized communication cost of computing many independent copies of any function $f$ is precisely equal to its single-copy information complexity:
\begin{theorem}[``Information $=$ Amortized Communication", \cite{BravermanR11}]  \label{thm_ic_eq_am_cc}
For any $\eps>0$ and any two-party communication function $f(x,y)$, 
\[ \lim_{n\longrightarrow \infty}\frac{\Dmu{f^n}{\mu^n}{\eps}}{n} = \ICmu{f}{\mu}{\eps}. \]
\end{theorem}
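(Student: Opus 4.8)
The plan is to prove the two inequalities separately, establishing $\lim_{n\to\infty} \Dmu{f^n}{\mu^n}{\eps}/n = \ICmu{f}{\mu}{\eps}$ by showing that the left-hand side is both at most and at least the right-hand side (once we argue the limit exists, which follows from subadditivity of $n \mapsto \Dmu{f^n}{\mu^n}{\eps}$ via Fekete's lemma, since a protocol for $f^{m+n}$ can be assembled from independent protocols for $f^m$ and $f^n$).

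\medskip
\noindent\textbf{The easy direction ($\geq$): information lower-bounds amortized communication.} First I would recall the basic fact that for any protocol $\pi$ computing $f$ with error $\eps$ under $\mu$, its communication cost is at least its information cost, i.e.\ $\CC(\pi) \geq \ICmu{\pi}{\mu}{}$, simply because a party cannot reveal more information about its input than the number of bits it sends. Applying this to the $n$-fold problem and using that information complexity is additive over independent copies — $\ICmu{f^n}{(\mu^n)}{\eps} = n \cdot \ICmu{f}{\mu}{\eps}$, which holds because the inputs are independent and one can run the coordinates in parallel without the transcript in one coordinate revealing anything about the inputs in another — we get $\Dmu{f^n}{\mu^n}{\eps} \geq \ICmu{f^n}{\mu^n}{\eps} = n\,\ICmu{f}{\mu}{\eps}$. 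Dividing by $n$ and taking the limit gives one direction. The slightly delicate point here is the additivity of information complexity over copies, which requires care about the error parameter: computing each of $n$ copies with error $\eps$ is what the direct-sum formulation demands, and the parallel protocol inherits this.

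\medskip
\noindent\textbf{The hard direction ($\leq$): a compression/amortization argument.} This is where the real work lies. Fix an information-optimal (or near-optimal) single-copy protocol $\pi$ with $\ICmu{\pi}{\mu}{} \leq \ICmu{f}{\mu}{\eps} + \delta$. I would run $n$ independent copies of $\pi$, one per coordinate, obtaining a transcript $\Pi = (\Pi_1,\dots,\Pi_n)$ whose total information content is about $n\,\ICmu{f}{\mu}{\eps}$, and then \emph{compress} this batch of $n$ interactive conversations down to roughly $\sum_i \ICmu{\pi}{\mu}{}$ bits of communication — i.e.\ an amortized per-copy cost approaching the information cost. The key tool is an interactive version of the Slepian--Wolf / source-coding theorem: because across the $n$ coordinates the per-round transcripts concentrate (by the asymptotic equipartition property / a law-of-large-numbers argument) around their expected information value, one can simulate all $n$ copies round-by-round using a shared sample from the public randomness and a rejection-sampling or correlated-sampling primitive, charging only about $I(X;\Pi\mid Y) + I(Y;\Pi\mid X)$ bits per copy on average (plus lower-order $o(n)$ overhead for synchronization and for the $\sqrt{n}$-scale fluctuations). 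One must then check that the compressed protocol still computes $f^n$ with error at most $\eps$ in each coordinate — the per-copy error of $\pi$ is preserved since the simulation reproduces the output distribution of each $\Pi_i$ faithfully.

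\medskip
\noindent\textbf{Main obstacle.} The crux is the interactive compression step in the $\leq$ direction: unlike Shannon's one-way setting, here the protocol is adaptive and each party's next message depends on the evolving transcript, so one cannot simply entropy-code a fixed source. The batching over $n$ independent copies is precisely what rescues us — it turns a hard worst-case interactive-compression problem (compressing a single protocol to its information cost is a major open problem!) into a tractable \emph{amortized} one, because the empirical distribution of transcripts over the $n$ coordinates is close to the true distribution with high probability, enabling AEP-style arguments. I expect the technical heart to be setting up the correlated-sampling procedure so that its communication overhead is genuinely $o(n)$ and so that the simulated transcripts are jointly close in distribution to $n$ honest runs of $\pi$; controlling the accumulation of sampling errors over the (up to) $\CC(\pi)$ rounds, and handling the $\sqrt{n}$ deviations in the information spectrum, is the part that requires the most care.
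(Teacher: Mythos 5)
Your overall structure matches the Braverman--Rao proof that this survey \emph{cites but does not reproduce}: the ``$\geq$'' direction is the combination of $\ICprot{\pi}{\mu}\le\|\pi\|$ (Lemma~\ref{lem:ICCC}) with the additivity of information complexity (Lemma~\ref{lem_additivity_of_ic}), and the ``$\leq$'' direction is the genuine amortized-compression content of \cite{BravermanR11}. So in terms of decomposition you are on the same track as the source.

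There is, however, a real gap in your ``$\geq$'' direction. You need $\ICmu{f^n}{\mu^n}{\eps} \geq n\cdot\ICmu{f}{\mu}{\eps}$, but the one-line justification you give --- ``one can run the coordinates in parallel without the transcript in one coordinate revealing anything about the inputs in another'' --- establishes only the \emph{opposite} inequality $\ICmu{f^n}{\mu^n}{\eps} \leq n\cdot\ICmu{f}{\mu}{\eps}$. The direction you actually need is the hard one, and it is precisely what the embedding argument in Lemma~\ref{lem_additivity_of_ic} is for: starting from an arbitrary $n$-fold protocol $\Pi$ with information $I_n$, one must \emph{extract} a single-copy protocol with information at most $I_n/n$, and this requires the careful choice of public randomness $R_i = X_{<i},Y_{>i}$ to break the correlation across coordinates (neither all-public nor all-private sampling works, as the XOR example in that proof shows). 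You flag ``care about the error parameter'' as the delicate point, but the delicacy is really in this embedding; as written, your chain $\Dmu{f^n}{\mu^n}{\eps}\geq \ICmu{f^n}{\mu^n}{\eps} \geq n\,\ICmu{f}{\mu}{\eps}$ is unsupported at the last step.

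Your sketch of the ``$\leq$'' direction is conceptually aligned with Braverman--Rao: run $n$ independent copies of a near-information-optimal $\pi$, then simulate round-by-round with a correlated/rejection sampling primitive, batching across the $n$ coordinates so that per-round overhead is sublinear. You correctly identify that the real work is the one-shot sampling lemma (communicating roughly $D(P\|Q)$ plus lower-order terms to agree on a sample from $P$ when Bob holds a prior $Q$) and the control of accumulated $o(n)$ overhead across the (constantly many, since $f$ is a fixed total function) rounds; of course none of that is spelled out here. One small but important point to keep straight: as the survey's footnote to this theorem notes, the protocol produced this way has \emph{per-copy} error $\eps$ --- hence overall success only $\approx (1-\eps)^n$ --- and this is exactly what the quantity $\Dmu{f^n}{\mu^n}{\eps}$ in the statement measures, so the error bookkeeping does close.
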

\noindent Here $\Dmu{f^n}{\mu^n}{\eps}$ denotes the minimum communication required for solving $n$ independent instances of $f$ with error at most $\eps$ on 
\it each copy\rm.\footnote{Indeed,
the ``$\le$" direction of this proof gives a protocol with overall success only $\approx (1-\eps)^n$ on all $n$ copies, 
see \cite{BravermanR11}.}
The above theorem assigns an \emph{operational} meaning to information complexity, namely, one which is grounded in reality
(in fact, it was recently shown that this characterization is a ``sharp threshold", see Theorem \ref{thm:dp_for_ic}). 

Theorem \ref{thm_ic_eq_am_cc} and some of the additional results mentioned in this survey, provide a strong evidence that 
information theory is the ``right" tool for studying interactive communication problems.  
One general benefits of information theory in addressing communication problems is that
it provides a set of simple yet powerful tools for reasoning about transmission problems and more broadly about quantifying relationships between 
interdependent random variables and conditional events. Tools that include mutual information, 
the chain rule, and the data processing inequality \cite{CoverT91}. 
Another, arguably most important benefit, is the \emph{additivity} of information complexity under composition of independent tasks 
(Lemma \ref{lem_additivity_of_ic} below). This is the main reason that information theory, unlike other analytic or combinatorial methods,  
is apt to give \emph{exact} bounds on rates and capacities (such as Shannon's noiseless coding theorem and Theorem \ref{thm_ic_eq_am_cc}). 
It is this benefit that has been primarily used in prior works (which are beyond the scope of this survey)
involving information-theoretic applications in communication complexity, circuit complexity, streaming, machine learning and privacy (\cite{
ChakrabartiSWY01, LS10, CKS03,BaryossefJKS04,JayramKR09,BGPW13,ZDJW13,WZ14} to mention a few). 


One caveat is that mathematically striking characterizations such as the noiseless coding theorem only become possible in the limit, where the number of
independent samples transmitted over the channel (i.e., the block-length) grows to infinity. One exception is Huffman's \emph{``one-shot" } compression scheme  (aka Huffman coding, \cite{huffman1952method}), 
which shows
that the expected number of bits $C(M)$ needed to transmit a \emph{single sample} from $M$, is very close (but not equal!) to the optimal rate
\begin{align} \label{eq_huffman}
H(M) \leq C(M) \leq H(M)+1.
\end{align}
Huffman's theorem of course implies Shannon's theorem (since entropy is additive over independent variables),  but is in fact
much stronger, as it asserts that the optimal transmission rate can be (essentially) achieved using much a smaller block length. 
Indeed, what happens for small block lengths 
is of importance for both practical and theoretical reasons, and it will be even more so in the interactive regime.  
While Theorem \ref{thm_ic_eq_am_cc} provides an interactive analogue of  Shannon's theorem, 
an intriguing question is whether an interactive analogue  of Huffman's ``one-shot" compression scheme exists.
When the number of communication rounds of a protocol is small (constant),
compressing it can morally\footnote{This is not 
accurate, since unlike the one-way transmission setting, in this setting the receiver has ``side information" about the transmitted message, e.g., when 
Bob sends the second message of the protocol, 
Alice has a prior distribution on this message 
conditioned on her input $X$ and the first message of the protocol $M_1$ which she sent before. Nevertheless, using ideas from rejection sampling, 
such simulation is possible in the ``one-shot" regime with $O(1)$ communication overhead per message \cite{HJMR07, BravermanR11}.}
be done by applying Huffman's compression scheme to each round of the protocol, since \eqref{eq_huffman} would entail at most a constant
overhead in communication. However, when the number of rounds is huge compared to the overall information revealed by the protocol (e.g., when each round 
reveals $\ll 1$ bits of 
information), this approach is doomed to fail, as it would ``spend" at least $1$ bit of communication per round.
Circumventing this major obstacle and the important implications of this (unsettled) question 
to the direct sum and product conjectures 
are extensively discussed  in Sections \ref{subsec_ds_dp} and \ref{sec_compression}.



Due to space constraints, this survey is primarily focused on the above relationship between information and communication complexity. 
As mentioned above, information complexity has recently found many more exciting applications in complexity theory -- 
to interactive coding, streaming lower bounds, extension complexity and  multiparty communication complexity (e.g., \cite{BaryossefJKS04, BM12, BP13, BOEPV13}).
Such applications are beyond the scope of this survey.


\paragraph{Organization}
We begin with a brief introduction to information complexity and some of its main properties (Section \ref{sec_prelims}). In Section
\ref{subsec_ds_dp} we give an overview of the direct sum and direct product conjectures and their relationship 
to interactive compression, in light of recent developments in the field. 
Section \ref{sec_compression} describes state-of-the-art interactive compression schemes. We conclude with 
several natural open problems in Section \ref{sec_openprobs}. 
In an effort to keep this survey as readable and self-contained as possible, we shall sometimes be 
loose on technical formulations, often ignoring constant and technical details which are not essential to the reader. 

\section{Model and Preliminaries} \label{sec_prelims}

The following background contains basic definitions and notations used throughout this survey. For a more detailed overview
of communication and information complexity, we refer the reader to an excellent monograph by Braverman \cite{Bra12}.

For a function $f: \mathcal{X} \times \mathcal{Y} \rightarrow \mathcal{Z}$, a distribution $\mu$ over$\cX\times\cY$, 
and a parameter $\eps > 0$,  $\Dmu{f}{\mu}{\eps}$ denotes the communication complexity of the cheapest
deterministic protocol computing $f$ on inputs sampled according to $\mu$ with error $\eps$. 
$\Rmu{f}{\eps}{}$ denotes the cost of the best \emph{randomized} public coin protocol which computes $f$ with error at most $\eps$,
for \emph{all} possible inputs $(x,y)\in \cX\times\cY$. When measuring the communication cost of a particular protocol $\pi$, we sometimes use
the notation $\|\pi\|$ for brevity. 
Essentially all results in this survey are proven in the former \emph{distributional} communication
model (since information complexity is meaningless without a prior distribution on inputs), but most lower bounds below 
can be extended to the randomized model via Yao's minimax theorem.
For the sake of concreteness, all of the results in this article are stated for (total) functions, though most of them apply to partial 
functions and relations as well.

\subsection{Information Theory}

Proofs of the claims below and a broader introduction to information theory can be found in \cite{CoverT91}.
The most basic concept in information theory is Shannon's entropy, which informally captures how predictable a random variable is:
\begin{definition}[Entropy]
 The \emph{entropy} of a random variable $A$ is $H(A) := \sum_a \Pr[A=a] \log(1/\Pr[A=a]).$
The \emph{conditional entropy} $H(A|B)$ is defined as $\E_{b \sim B} \left[ H(A|B=b) \right]$.
\end{definition}

A key measure in this article is the \emph{Mutual Information} between two random variables, 
which quantifies the amount of  correlation between them:
\begin{definition}[Mutual Information]
The \emph{mutual information} between two random variables $A,B$, denoted $I(A;B)$ is defined to be the quantity 
$H(A) - H(A|B) = H(B) - H(B|A).$ The \emph{conditional mutual information} $I(A;B |C)$ is $H(A|C) - H(A|BC)$.
\end{definition}

A related distance measure between \emph{distributions} is the \it Kullback-Leibler \rm (KL) divergence 
\[ \Div{p}{q} := \sum_{x } p(x) \log \frac{p(x)}{q(x)} = \E_{x\sim p} \left[ \log  \frac{p(x)}{q(x)} \right].\]

We shall sometimes abuse the notation and write $\Div{A|c}{B|c}$ to denote the KL divergence between the associated
distributions of the random variables $(A|C=c)$ and $(B|C=c)$. The following connection between divergence and mutual information 
is well known:
\begin{lemma}[Mutual information in terms of Divergence] \label{prop_dim_IC} 
\[ I(A;B | C) = \E_{b,c} \left[ \Div{A|bc}{A|c} \right] = \E_{a,c} \left[ \Div{B|ac}{B|c} \right]. \]
\end{lemma}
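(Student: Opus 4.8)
The plan is to reduce everything to the expansion of conditional mutual information as a single sum of log-likelihood ratios, and then simply regroup that sum. First I would recall from the definitions of entropy and conditional entropy that, writing $p$ for the joint distribution of $(A,B,C)$,
\[
I(A;B\mid C) = H(A\mid C) - H(A\mid BC) = -\sum_{a,c} p(a,c)\log p(a\mid c) + \sum_{a,b,c} p(a,b,c)\log p(a\mid b,c),
\]
and since $\sum_{a,c} p(a,c)\log p(a\mid c) = \sum_{a,b,c} p(a,b,c)\log p(a\mid c)$, this collapses to the familiar identity
\[
I(A;B\mid C) = \sum_{a,b,c} p(a,b,c)\,\log \frac{p(a\mid b,c)}{p(a\mid c)}.
\]
(As usual, all sums are over the support of $p$, with the standard conventions $0\log 0 = 0$, etc., so there are no division-by-zero issues.)

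Next I would factor the outer sum over $(b,c)$: writing $p(a,b,c) = p(b,c)\,p(a\mid b,c)$ gives
\[
I(A;B\mid C) = \sum_{b,c} p(b,c) \sum_{a} p(a\mid b,c)\,\log \frac{p(a\mid b,c)}{p(a\mid c)} = \E_{b,c}\!\left[\,\Div{A\mid bc}{A\mid c}\,\right],
\]
where the inner sum is exactly the KL divergence between the conditional distribution of $A$ given $\{B=b,C=c\}$ and that of $A$ given $\{C=c\}$, and the expectation is over $(b,c)$ drawn from the joint marginal of $(B,C)$. This establishes the first claimed equality. The second equality, $I(A;B\mid C) = \E_{a,c}\!\left[\Div{B\mid ac}{B\mid c}\right]$, follows immediately by symmetry of mutual information, $I(A;B\mid C) = I(B;A\mid C)$ (visible already in the symmetric form $\sum p(a,b,c)\log\frac{p(a,b\mid c)}{p(a\mid c)p(b\mid c)}$), applying the argument above with the roles of $A$ and $B$ interchanged.

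There is essentially no hard step here — the entire content is a bookkeeping exercise with conditional probabilities. The only points that warrant a little care are: (i) checking that the two ways of expanding $I(A;B\mid C)$ from entropies agree with the log-ratio form, which is the one short computation above; (ii) being precise about which joint marginal each expectation is taken against (here $(B,C)$ in the first identity, $(A,C)$ in the second); and (iii) keeping the support/zero-mass conventions consistent so that every conditional distribution appearing inside a divergence is well defined. None of these presents a genuine obstacle, so the proof is short.
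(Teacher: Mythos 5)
Your proof is correct; it is the standard textbook derivation, expanding $I(A;B\mid C)=H(A\mid C)-H(A\mid BC)$ into the log-likelihood-ratio form $\sum_{a,b,c}p(a,b,c)\log\frac{p(a\mid b,c)}{p(a\mid c)}$ and then grouping the sum by $(b,c)$ to recognize the inner sum as $\Div{A\mid bc}{A\mid c}$. The paper itself does not include a proof of this lemma --- it states it as a standard fact and refers the reader to Cover and Thomas \cite{CoverT91} --- so there is nothing to compare against beyond noting that your argument is exactly the one found in that reference, and it is complete and correct, including the observation that the second identity follows by symmetry of conditional mutual information.
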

Intuitively, the above equation asserts that, if the mutual information between $A$ and $B$ (conditioned on $C$) is large, then the distribution 
of $(A|c)$ is ``far" from $(A|bc)$ for average values of $b,c$ (this captures the fact that the ``additional information" $B$ provides on $A$ given $C$ is large).
One of the most useful properties of Mutual Information and KL Divergence is the chain rule:
\begin{lemma}[Chain Rule] 
Let $A, B, C, D$ be four random variables in the same probability space. Then
\[ I(AB;C|D) = I(A;C|D) + I(B;C|AD)\]
\[ = \E_{c,d}\left[ \Div{A|cd}{A|d}\right] + \E_{a,c,d}\left[ \Div{B|acd}{A|ad}\right] . \]
\end{lemma}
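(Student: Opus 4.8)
The plan is to reduce the identity to the elementary chain rule for (conditional) entropy and then to Lemma~\ref{prop_dim_IC}. First I would recall the entropy chain rule $H(AB) = H(A) + H(B \mid A)$, which is immediate from expanding $\log\frac{1}{\Pr[A=a,\,B=b]} = \log\frac{1}{\Pr[A=a]} + \log\frac{1}{\Pr[B=b \mid A=a]}$ and taking expectations over the joint law of $(A,B)$. Conditioning every term on $D$ and averaging over $D$ yields the conditional form $H(AB \mid D) = H(A \mid D) + H(B \mid AD)$, and likewise $H(AB \mid CD) = H(A \mid CD) + H(B \mid ACD)$.

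Next I would unfold the definition of conditional mutual information and regroup terms. By definition $I(AB;C \mid D) = H(AB \mid D) - H(AB \mid CD)$; substituting the two expansions above and pairing the $A$-terms together and the $B$-terms together gives
\[ I(AB;C\mid D) = \bigl(H(A\mid D) - H(A\mid CD)\bigr) + \bigl(H(B\mid AD) - H(B\mid ACD)\bigr). \]
The first bracket is exactly $I(A;C\mid D)$ and the second is $I(B;C\mid AD)$ (reading the pair $(A,D)$ as the conditioning variable), which establishes the first displayed equality. For the second equality I would apply Lemma~\ref{prop_dim_IC} to each summand: with the roles $(A,B,C)\mapsto(A,C,D)$ it gives $I(A;C\mid D) = \E_{c,d}\bigl[\Div{A\mid cd}{A\mid d}\bigr]$, and with the roles $(A,B,C)\mapsto(B,C,(A,D))$ — i.e.\ treating the tuple $AD$ as a single conditioning argument — it gives $I(B;C\mid AD) = \E_{a,c,d}\bigl[\Div{B\mid acd}{B\mid ad}\bigr]$; summing the two completes the proof.

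There is no genuine obstacle here — the statement is a standard manipulation. The only point requiring care is the bookkeeping when invoking Lemma~\ref{prop_dim_IC} for the term $I(B;C\mid AD)$: one must substitute the \emph{pair} $(A,D)$ into the single ``conditioning'' slot of that lemma rather than splitting it, and keep track that the nested expectations are over the appropriate marginals and conditionals of the joint distribution of $(A,B,C,D)$. I would also flag what appears to be a typo in the statement: the second divergence should read $\Div{B\mid acd}{B\mid ad}$ rather than $\Div{B\mid acd}{A\mid ad}$, as is forced by the derivation above.
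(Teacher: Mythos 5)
Your proof is correct and is the standard textbook derivation. The paper itself does not prove this lemma — it defers to the reference \cite{CoverT91} for all the basic information-theoretic facts in Section~\ref{sec_prelims} — so there is no paper proof to compare against; but your route (entropy chain rule $\Rightarrow$ conditional entropy chain rule $\Rightarrow$ regroup in the definition of conditional mutual information, then apply Lemma~\ref{prop_dim_IC} to each summand) is exactly what one finds in Cover and Thomas, and the bookkeeping point you flag about feeding the pair $(A,D)$ into the single conditioning slot of Lemma~\ref{prop_dim_IC} is the right thing to be careful about. You are also correct that the second divergence in the paper's statement contains a typo and should read $\Div{B\mid acd}{B\mid ad}$, not $\Div{B\mid acd}{A\mid ad}$.
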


\begin{lemma}[Conditioning on independent variables does not decrease information] \label{lem_cond_indep}
Let $A,B,C,D$ be four random variables in the same probability space. If $A$ and $D$ are conditionally independent given $C$, 
then it holds that 
$I(A;B|C) \leq I(A;B|CD)$.
\end{lemma}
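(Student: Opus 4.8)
The plan is to derive the inequality from the chain rule for mutual information by expanding a single three-way quantity in two different orders. The natural candidate is $I(A;BD\mid C)$: on the one hand, grouping $D$ first, the chain rule gives
\[ I(A;BD\mid C) = I(A;D\mid C) + I(A;B\mid CD), \]
and on the other hand, grouping $B$ first,
\[ I(A;BD\mid C) = I(A;B\mid C) + I(A;D\mid CB). \]
Equating the two expansions yields the identity $I(A;B\mid CD) - I(A;B\mid C) = I(A;D\mid CB) - I(A;D\mid C)$.

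The second step is to plug in the hypothesis. Conditional independence of $A$ and $D$ given $C$ is exactly the statement that $I(A;D\mid C) = 0$, so the right-hand side collapses to $I(A;D\mid CB)$. Hence $I(A;B\mid CD) = I(A;B\mid C) + I(A;D\mid CB)$.

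The final step is to invoke non-negativity of (conditional) mutual information — which follows, e.g., from Lemma \ref{prop_dim_IC} since KL divergence is always non-negative — to conclude that $I(A;D\mid CB) \ge 0$, and therefore $I(A;B\mid CD) \ge I(A;B\mid C)$, as claimed.

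There is no real obstacle here; the only ``trick'' is spotting that one should expand $I(A;BD\mid C)$ rather than manipulate the two target quantities directly. One small point worth stating carefully is that the argument shows more than the inequality: it gives the exact surplus $I(A;B\mid CD) - I(A;B\mid C) = I(A;D\mid CB)$, i.e.\ the information $B$ gains about $A$ by additionally conditioning on $D$ is precisely the information $B$ would reveal about $D$ given $C$ — a fact that is occasionally useful on its own. It is also worth noting that the conclusion genuinely requires the independence hypothesis; without it the inequality can fail in either direction, which is why the ``conditioning on independent variables'' caveat appears in the name.
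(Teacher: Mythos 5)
Your proof is correct and uses the same argument as the paper: expand $I(A;BD\mid C)$ via the chain rule in both orders, use the hypothesis $I(A;D\mid C)=0$, and conclude by non-negativity of conditional mutual information. The only cosmetic difference is that you first isolate the exact identity $I(A;B\mid CD)-I(A;B\mid C)=I(A;D\mid CB)$ before invoking the hypothesis, whereas the paper substitutes as it goes.
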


\begin{proof}
We apply the chain rule for mutual information twice. On one hand, we have
$ I(A;BD|C) = I(A;B|C) + I(A;D|CB) \geq I(A;B|C)$
since mutual information is nonnegative. On the other hand,
$ I(A;BD|C) = I(A;D|C) + I(A;B|CD) = I(A;B|CD)$
since $I(A;D|C)=0$ by the independence assumption on $A$ and $D$. Combining both equations completes the proof.
\end{proof}

Throughout this article, we denote by $|p-q|$ the \it total variation \rm distance between the distributions $p$ and $q$. 
Pinsker's inequality bounds statistical distance in terms of the KL divergence.
It will be useful for analysis of the interactive compression schemes in Section \ref{sec_compression}.

\begin{lemma}[Pinsker's inequality] \label{lemma:pinsker} 
$|p - q|^2 \leq \frac{1}{2}\cdot \Div{p}{q}$. 
\end{lemma}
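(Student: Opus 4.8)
The plan is to follow the classical two-step route: first reduce the general inequality to the case of distributions supported on a two-point set, and then dispatch that case by a short one-variable calculus argument.

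For the reduction, I would put $A = \{x : p(x) \ge q(x)\}$, so that by definition of total variation distance $|p - q| = p(A) - q(A)$. Consider the ``coarsened'' Bernoulli distributions $\bar p = (p(A),\, 1 - p(A))$ and $\bar q = (q(A),\, 1 - q(A))$ on $\{0,1\}$; then trivially $|\bar p - \bar q| = |p - q|$. The key point is that merging $\cX$ into the two blocks $A$ and $\cX \setminus A$ cannot increase divergence, i.e. $\Div{p}{q} \ge \Div{\bar p}{\bar q}$. This is the data-processing inequality for KL divergence; concretely it follows from the log-sum inequality applied separately to the blocks $A$ and $\cX\setminus A$, namely $\sum_{x\in A} p(x)\log\tfrac{p(x)}{q(x)} \ge p(A)\log\tfrac{p(A)}{q(A)}$ and the analogue for the complement, and then adding the two. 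Hence it suffices to prove $|\bar p - \bar q|^2 \le \tfrac12 \Div{\bar p}{\bar q}$ (and I may assume the divergence is finite, as otherwise there is nothing to prove).

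For the binary case I would write $\alpha = p(A) \ge \beta = q(A)$ and, for fixed $\alpha$ and $\beta \in (0,\alpha]$, define
\[ g(\beta) = \alpha\ln\frac{\alpha}{\beta} + (1-\alpha)\ln\frac{1-\alpha}{1-\beta} - 2(\alpha-\beta)^2 , \]
so that $g(\beta)\ge 0$ on $(0,\alpha]$ gives the inequality with the natural logarithm — the strongest form, from which the base-$2$ statement used elsewhere follows a fortiori, since converting to base $2$ divides the divergence by $\ln 2 < 1$ and hence only enlarges the right-hand side. One checks $g(\alpha) = 0$ and
\[ g'(\beta) = -\frac{\alpha}{\beta} + \frac{1-\alpha}{1-\beta} + 4(\alpha-\beta) = -\frac{\alpha-\beta}{\beta(1-\beta)} + 4(\alpha-\beta) = (\alpha-\beta)\Bigl(4 - \frac{1}{\beta(1-\beta)}\Bigr) \le 0 \]
on $(0,\alpha]$, because $\alpha - \beta \ge 0$ and $\beta(1-\beta) \le \tfrac14$. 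Thus $g$ is nonincreasing on this interval, so $g(\beta) \ge g(\alpha) = 0$, which is exactly what we want.

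The reduction step is essentially bookkeeping once the log-sum inequality is invoked, so the only real content lies in the one-dimensional estimate; the one small subtlety there is to differentiate with respect to the ``$q$''-parameter $\beta$ rather than $\alpha$, which is what makes the derivative factor cleanly and reduces the whole matter to the elementary bound $\beta(1-\beta)\le 1/4$.
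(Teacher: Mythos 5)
Your proof is correct. Note that the paper does not actually prove this lemma — it states it as a standard fact and defers to Cover and Thomas for proofs of all basic information-theoretic claims in Section 2.1 — so there is no in-paper argument to compare against. What you give is the standard textbook proof with the sharp constant $2$ in nats: reduce to a two-point alphabet by coarsening to $\{A,\cX\setminus A\}$ with $A=\{x: p(x)\ge q(x)\}$ (justified by the log-sum inequality, i.e.\ data processing for KL), then verify the binary inequality by differentiating in the $q$-parameter and invoking $\beta(1-\beta)\le \tfrac14$. The derivative computation, the choice to differentiate in $\beta$ rather than $\alpha$, the handling of the infinite-divergence case, and the observation that the base-$2$ form stated in the paper is the weaker consequence of the natural-log bound are all correct.
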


\subsection{Interactive Information complexity}

Given a communication protocol $\pi$, $\pi(x,y)$ denotes the concatenation of the public randomness with all the messages that are
sent during the execution of $\pi$ (for information purposes, this is without loss of generality, since the public string $R$ conveys no information about the inputs). 
We call this the \emph{transcript} of the protocol. When referring to the random variable 
denoting the transcript, rather than a specific transcript, we will use the notation $\Pi(x,y)$ --- or simply $\Pi$ when $x$ and $y$ 
are clear from the context.
\begin{definition}[Internal Information Cost  \cite{ChakrabartiSWY01, BBCR}]
The \it (internal) information cost \rm of a protocol over inputs drawn from a distribution 
$\mu$ on $\cX\times\cY$, is given by: 
\begin{align}      \label{def:infoProt}
\ICprot{\pi}{\mu}:=I(\Pi;X|Y)+I(\Pi;Y|X).
\end{align}
\end{definition}
Intuitively, the definition in \eqref{def:infoProt} captures how much additional information the two parties learn about each other's inputs by observing the
protocol's  transcript.  For example, the information cost of the trivial protocol in which Alice and Bob simply exchange their inputs, is simply the sum of their
conditional marginal entropies $H(X|Y) + H(Y|X)$ (notice that, in contrast, the \emph{communication} cost of this protocol is $|X|+|Y|$ which can be arbitrarily 
larger than the former quantity).

Another information measure which makes sense at certain contexts is the \emph{external} information cost of a protocol, $\ICprotE{\pi}{\mu}:=I(\Pi;XY),$
which captures what an \emph{external} observer learns on average  about both player's inputs by observing the transcript of $\pi$. This quantity will be 
of minor interest in this survey (though it playes a central role in many applications). 
The external information cost of a protocol is always at least as large as its (internal) information cost, since intuitively an external observer is ``more
ignorant" to begin with. We remark that when $\mu$ is a \emph{product} distribution, then $\ICprotE{\pi}{\mu}=\ICprot{\pi}{\mu}$ (see, e.g., \cite{Bra12}).
 

One can now define the \emph{information complexity} of a function $f$ with respect to $\mu$ and error $\eps$ as the least amount of 
information the players need to reveal to each other in order to compute $f$ with error at most $\eps$:

\begin{definition}
The \it Information Complexity \rm of $f$ with respect to $\mu$ (and error $\eps$) is   
$$\ICmu{f}{\mu}{\eps}:= \inf_{\pi: \; \Pr_\mu[\pi(x,y)\neq f(x,y)]\le \eps} \ICprot{\pi}{\mu}. $$
\end{definition}

What is the relationship between the information and communication complexity of $f$? This question is at the core of this survey. 
The answer to one direction is easy:
Since one bit of communication can never reveal more than one bit of information,  the communication cost of any protocol is
always an upper bound on its information cost over {\em any} distribution
$\mu$:

\begin{lemma}[\cite{BravermanR11}] \label{lem:ICCC}
 For any distribution $\mu$, $\ICprot{\pi}{\mu}\le \|\pi\|$. 
\end{lemma}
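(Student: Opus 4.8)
The plan is to prove the bound one transmitted bit at a time, making precise the intuition that ``one bit of communication reveals at most one bit of information.'' View $\pi$ as proceeding in single-bit steps, and let $B_1,B_2,\ldots,B_c$ be the bits communicated during a run (so $c = \|\pi\|$). Write the transcript as $\Pi = (R,B_1,\ldots,B_c)$, where $R$ is the public randomness, and set $\Pi_{<k} := (R,B_1,\ldots,B_{k-1})$. Since $R$ is independent of $(X,Y)$ we have $I(R;X\mid Y) = I(R;Y\mid X) = 0$, so applying the chain rule to each of the two terms of \eqref{def:infoProt} gives
\[
\ICprot{\pi}{\mu} \;=\; \sum_{k=1}^{c}\Bigl( I(B_k; X \mid Y, \Pi_{<k}) + I(B_k; Y \mid X, \Pi_{<k}) \Bigr),
\]
and it will suffice to bound each summand by $1$.

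Fix a step $k$ and suppose without loss of generality that Alice sends $B_k$. By the structure of a communication protocol, $B_k$ is a deterministic function of $X$, the partial transcript $\Pi_{<k}$, and Alice's private coins $R_A$; moreover, conditioned on $(X,\Pi_{<k})$, Alice's private coins are independent of Bob's input $Y$. (This is the ``rectangle''/Markov property of protocols: conditioned on $X$ and a fixed prefix $\Pi_{<k}$, the event that a run is consistent with that prefix factors into a condition on $R_A$ alone and a condition on $(Y,R_B)$, so $R_A \perp (Y,R_B)$.) Hence $B_k$ is conditionally independent of $Y$ given $(X,\Pi_{<k})$, so $I(B_k;Y\mid X,\Pi_{<k}) = 0$: this step contributes nothing to the ``$Y$'' term. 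The remaining term is at most the entropy of a single bit, $I(B_k;X\mid Y,\Pi_{<k}) \le H(B_k\mid Y,\Pi_{<k}) \le H(B_k) \le 1$. Symmetrically, at a step where Bob speaks the ``$X$'' term vanishes and the ``$Y$'' term is at most $1$, and summing over the $c$ steps yields $\ICprot{\pi}{\mu} \le c = \|\pi\|$.

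I expect the proof to be short, with the only step requiring more than generic manipulations of entropy and mutual information being the conditional-independence claim $B_k \perp Y \mid (X,\Pi_{<k})$ at a step where Alice speaks --- the ``Markov chain'' structure of protocols. This is also precisely what buys the clean bound rather than the lossy estimate $\ICprot{\pi}{\mu} \le H(\Pi\mid X) + H(\Pi\mid Y) \le 2\|\pi\|$: by charging each transmitted bit to the speaker's term only, one counts each bit once instead of twice.
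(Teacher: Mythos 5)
Your proof is correct, and it is essentially the standard argument from \cite{BravermanR11}, which this survey cites without reproducing: expand both mutual-information terms bit by bit via the chain rule (the public-coin term dropping out because $R\perp (X,Y)$), kill the speaker-side contribution using the rectangle/Markov property of protocols, and bound the remaining contribution by the entropy of one bit. The conditional-independence step $B_k\perp Y\mid (X,\Pi_{<k})$ is exactly the right thing to isolate, and your explanation of why it yields $\|\pi\|$ rather than the lossy $2\|\pi\|$ is accurate.

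One small clarification worth making explicit: at a general step $k$ the identity of the speaker may itself depend on the prefix $\Pi_{<k}$, so ``suppose wlog Alice sends $B_k$'' should really be read as a case split inside the expectation over $\pi_{<k}$. That is, write
\[
I(B_k;X\mid Y,\Pi_{<k}) + I(B_k;Y\mid X,\Pi_{<k}) = \E_{\pi_{<k}}\bigl[\,I(B_k;X\mid Y,\Pi_{<k}=\pi_{<k}) + I(B_k;Y\mid X,\Pi_{<k}=\pi_{<k})\,\bigr],
\]
and observe that for each fixed $\pi_{<k}$ the speaker is determined, exactly one of the two inner terms vanishes by your rectangle argument, and the other is at most $1$; averaging gives the per-step bound. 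This is cosmetic and does not affect the substance of your argument.
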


The answer to the other direction, namely, whether any protocol can be compressed to roughly its information cost, will be partially
given in the remainder of this article. \\



\subsection{The role of private randomness in information complexity} 
\label{sec_priv_vs_pub_coins}

A subtle but vital issue when dealing with information complexity, is understanding the role of private vs.\ public randomness.
In public-coin communication complexity, one often ignores the usage of private coins in a protocol, as they 
can always be simulated by public coins. 
When dealing with \emph{information complexity}, the situation is somewhat the opposite: Public coins are essentially a redundant resource (as it can be easily shown via 
the chain rule that $\ICprot{\pi}{\mu} = \E_R[\ICprot{\pi_R}{\mu}]$), 
while the usage of private coins is crucial for minimizing the information cost, and fixing these coins
is prohibitive (once again, for communication purposes in the distributional model,  one may always fix the entire randomness of the protocol, 
via the averaging principle). To illustrate this point, consider the simple example where in the protocol $\pi$, Alice sends Bob her $1$-bit 
input $X\sim Ber(1/2)$, XORed with some random
bit $Z$. If $Z$ is private, Alice's message clearly reveals $0$ bits of information to Bob about $X$. However, for any fixing of $Z$, this message would reveal an entire bit(!).
The general intuition is that a protocol with low information cost would reveal information about the player's inputs in a ``careful manner", and the usage of private coins 
serves to  ``conceal" parts of their inputs. Indeed, it was recently shown that the restriction to public coins may cause an exponential blowup in the information revealed
compared to private-coin protocols (\cite{GKR14,BMY14}). In fact, we shall see in Section \ref{subsec_ds_dp} that quantifying this gap between public-coin and private-coin information complexity is tightly related to the question of interactive compression. 


For the remainder of this article, communication protocols $\pi$ are therefore assumed to use private coins (and therefore such protocols are randomized even conditioned
on the inputs $x,y$ and $R$), and it is crucial that the information cost $\ICprot{\pi}{\mu} = I(\Pi;X|YR)+I(\Pi;Y|XR)$ is measured conditioned on the \it public \rm randomness $R$, but never on the private coins of $\pi$.

\section{Additivity of Information Complexity} \label{sec_ic_additive}

Perhaps the single most remarkable property of information complexity is that it is a fully additive measure over composition of tasks. This property 
is what primarily makes information complexity a natural ``relaxation" for addressing direct sum and product theorems.
The main ingredient of the following lemma appeared first in the works of \cite{Razborov08,Raz98} and more explicitly in \cite{BBCR,BravermanR11,Bra12}.
In the following, $f^n$ denotes the function that maps the tuple $((x_1,\ldots,x_n),(y_1,\ldots,y_n))$ to  $(f(x_1,y_1),\ldots,f(x_n,y_n)))$.

\begin{lemma}[Additivity of Information Complexity] \label{lem_additivity_of_ic}
$\ICmu{f^n}{\mu^n}{\eps} = n\cdot \ICmu{f}{\mu}{\eps}$.
\end{lemma}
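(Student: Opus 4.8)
## Proof Proposal

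The plan is to prove the two inequalities separately, with the direction $\ICmu{f^n}{\mu^n}{\eps} \le n \cdot \ICmu{f}{\mu}{\eps}$ being the easy one and the direction $\ICmu{f^n}{\mu^n}{\eps} \ge n \cdot \ICmu{f}{\mu}{\eps}$ being the heart of the matter. For the upper bound, I would take a near-optimal protocol $\pi$ for a single copy of $f$ with error at most $\eps$ and information cost within $\delta$ of $\ICmu{f}{\mu}{\eps}$, and run it independently on each of the $n$ coordinates (using fresh private randomness for each copy). The resulting protocol solves $f^n$ with error at most $\eps$ on each copy. The key point is that, because the inputs $(x_i,y_i)$ are drawn independently across copies and each sub-protocol uses independent private coins, the chain rule for mutual information lets one decompose $I(\Pi; \mathbf{X} \mid \mathbf{Y}) + I(\Pi; \mathbf{Y} \mid \mathbf{X})$ into a sum of $n$ terms, one per copy, each equal to the single-copy information cost $\ICprot{\pi}{\mu}$. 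Here one must be slightly careful: conditioning on the transcripts of the other copies and on the other coordinates of the input should only be exploited in a way that is neutral or favorable, and Lemma~\ref{lem_cond_indep} (conditioning on independent variables does not decrease information) combined with independence across copies is exactly what makes the per-copy terms collapse to $\ICprot{\pi}{\mu}$ rather than something larger.

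For the lower bound, I would take an arbitrary protocol $\tau$ solving $f^n$ under $\mu^n$ with error at most $\eps$ on each copy, and show $\ICprot{\tau}{\mu^n} \ge n \cdot \ICmu{f}{\mu}{\eps}$. The strategy is a direct-sum-style ``single-copy extraction'': for each coordinate $i \in [n]$, construct a single-copy protocol $\tau_i$ for $f$ whose information cost is at most the $i$-th ``slice'' of $\ICprot{\tau}{\mu^n}$, so that summing over $i$ gives the claim. Concretely, $\tau_i$ embeds its genuine input $(x,y)$ into coordinate $i$ and, using public randomness, samples the remaining coordinates $(x_j, y_j)_{j \ne i}$ from $\mu$ and simulates $\tau$; the sampling can be split so that $x_j$ for $j$ in some public-coin-chosen set is shared and $y_j$ for the complementary set is shared, with the rest generated privately, which is the standard trick for keeping information cost under control while still being able to run the protocol. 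The error on copy $i$ is inherited from $\tau$'s per-copy error guarantee, so $\tau_i$ has error at most $\eps$ and hence $\ICprot{\tau_i}{\mu} \ge \ICmu{f}{\mu}{\eps}$.

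The crux is then an information-theoretic accounting showing $\sum_{i=1}^n \ICprot{\tau_i}{\mu} \le \ICprot{\tau}{\mu^n}$. This is where the chain rule does the real work: writing $\mathbf{X} = (X_1, \dots, X_n)$ and $\mathbf{Y} = (Y_1, \dots, Y_n)$, one expands $I(\Pi; \mathbf{X} \mid \mathbf{Y}) = \sum_i I(\Pi; X_i \mid \mathbf{Y}, X_{<i})$ and similarly for the other term, and then matches each summand to the information cost of $\tau_i$ with respect to the auxiliary public randomness (which encodes $X_{<i}$, $Y_{>i}$, or whatever the splitting dictates). The subtle steps are (i) verifying that the auxiliary variables used as public randomness in $\tau_i$ are genuinely independent of the embedded input $(X_i, Y_i)$ under $\mu^n$, so that conditioning on them is legitimate and does not inflate information (again invoking Lemma~\ref{lem_cond_indep}), and (ii) making sure the ``conditioning on public randomness, never on private coins'' discipline emphasized in Section~\ref{sec_priv_vs_pub_coins} is respected throughout, since a careless fixing of coordinates could turn private-coin concealment into public-coin exposure and break the bound. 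I expect step (ii) — correctly partitioning the simulated coordinates into shared-versus-private and tracking which conditionings are allowed — to be the main obstacle; the rest is bookkeeping with the chain rule and nonnegativity of mutual information.
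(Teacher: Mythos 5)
Your proposal follows the same route as the paper: for the lower bound, embed the genuine pair $(x,y)$ into a uniformly random coordinate $i$, publicly reveal $R_i = X_{<i}, Y_{>i}$ and have each player privately complete their remaining coordinates, then bound the information cost via the chain rule together with Lemma~\ref{lem_cond_indep}, and the upper bound is the independent-copies protocol, exactly as in the paper. The only point to tighten is that the partition of the auxiliary coordinates must be the \emph{ordered} split $X_{<i}, Y_{>i}$ (not an arbitrary ``public-coin-chosen set''), since the chain-rule telescoping $\sum_i I(\Pi; X_i \mid X_{<i}, \mathbf{Y}) = I(\Pi; \mathbf{X} \mid \mathbf{Y})$ relies precisely on that asymmetry — but you do identify the correct $R_i$ in your final paragraph, so the argument goes through.
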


\begin{proof}
The ($\leq$) direction of the lemma is easy, and follows from a simple argument that applies the single-copy optimal protocol independently 
to each copy of $f^n$, with independent randomness. We leave the simple analysis of this protocol as an exercise to the reader.

The ($\geq$) direction is the main challenge. Will will prove it in a contra-positive fashion: Let $\Pi$ be an $\eps$-error protocol for $f^n$,
such that $\ICprot{\Pi}{\mu^n} = I$ (recall that here $\eps$ denotes the per-copy error of $\Pi$ in computing $f(x_i,y_i)$). 
We shall use $\Pi$ to produce a \it single-copy \rm protocol for $f$ whose information cost is $\leq I/n$,
which would complete the proof. The guiding intuition for this is that $\Pi$ should reveal $I/n$ bits of information about an average coordinate.  

To formalize this intuition, let $(x,y)\sim \mu$, and denote $\X := X_1\ldots X_n$ , $X_{\leq i} := X_1 \ldots X_{i}$ and $X_{- i} := X_1 \ldots X_{i-1},X_{i+1},\ldots, X_n$, 
and similarly for $\Y, Y_{\leq i}, Y_{-i}$.
A natural idea is for Alice and Bob to ``embed" their respective inputs $(x,y)$ to a (publicly chosen) random coordinate $i \in [n]$ of $\Pi$, and execute $\Pi$. 
However, $\Pi$ is defined over $n$ input copies, so in order to execute it, the players need to somehow ``fill in" the rest $(n-1)$ coordinates, each according to $\mu$.
How should this step be done? The first attempt is for Alice and Bob to try and complete $X_{-i},Y_{-i}$ privately. This approach fails if $\mu$
is a non-product distribution, since there's no way the players can sample $X$ and $Y$ privately, such that $(X,Y)\sim \mu$ if $\mu$ correlates
the inputs. The other extreme -- sampling $X_{-i},Y_{-i}$ using public randomness only -- would resolve the aforementioned correctness issue,
but might leak too much information: An instructive example to consider is where, in the first message of $\Pi$, Alice sends Bob the XOR of the $n$ bits of her uniform 
input $X$: $M = X_1\oplus X_2\oplus \ldots \oplus X_n$. Conditioned on $X_{-i},Y_{-i}$, $M$ reveals $1$ bit of information about $X_i$ to Bob, while we want to argue that 
in this case, only $1/n$ bits are revealed about $X_i$. So this 
approach reveals too much information.

It turns out that the ``right" way of breaking the dependence across the coordinates is to use a combination of public and private randomness. 
Let us define, for each $i\in [n]$, the public random variable 
\[ R_i:= X_{< i},Y_{> i}. \]
Note that given $R_i$, Alice can complete all her missing inputs $X_{>i}$ \it privately \rm according to $\mu$, and Bob can do the same for $Y_{<i}$.
Let us denote by $\theta(x,y,i,R_i)$ the 
protocol transcript produced by running $\Pi(X_1,...,X_{i-1},x,X_{i+1},...,X_n \;, \; Y_1,...,Y_{i-1},y,Y_{i+1},...,Y_n)$ and outputting its answer on the $i$'th coordinate. 
Let $\Theta(x,y)$ be the protocol obtained by running $\theta(x,y,i,R_i)$ on a uniformly selected $i\in [n]$.

By definition, $\Pi$  computes $f^n$ with a \it per-copy \rm error of $\eps$, and thus in particular $\Theta(x,y) = f(x,y)$ with probability $\geq 1-\eps$.
To analyze the information cost of $\Theta$, we write:
\begin{align*} 
& I(\Theta ; x | y) = \E_{i,R_i}[I(\theta ; x | y, R_i)] = \sum_{i=1}^n \frac{1}{n} \cdot I(\Pi;X_i \; | \; Y_i , R_i)  \\
& = \frac{1}{n} \sum_{i=1}^n I(\Pi;X_i \; | \; Y_i , X_{<i} Y_{>i})  = \frac{1}{n} \sum_{i=1}^n I(\Pi;X_i \; | \; X_{<i} Y_{\geq i}) \\
& \leq  \frac{1}{n} \sum_{i=1}^n I(\Pi;X_i \; | \; X_{<i} \Y) =  \frac{1}{n} \cdot I(\Pi; \X \; | \; \Y), 
\end{align*}
where the inequality follows from Lemma \ref{lem_cond_indep}, since $I(Y_{<i} ; X_i | X_{<i}) = 0$
by construction, and the last transition is by the chain rule for mutual information. By symmetry of construction, an analogous argument
shows that $ I(\Theta ; y | x) \leq I(\Pi; \Y \; | \; \X)/n$, and combining these facts gives
\begin{equation} \label{eq_additivity_ic}
\ICprot{\Theta}{\mu} \leq    \frac{1}{n} \left(  I(\Pi; \X \; | \; \Y)  + I(\Pi; \Y \; | \; \X) \right) = \frac{I}{n}. 
\end{equation}
\end{proof}

\section{Direct Sum, Product, and the Interactive Compression Problem} \label{subsec_ds_dp}

Direct sum and direct product theorems assert a lower bound on the 
complexity of solving $n$ copies of a problem in parallel, in terms of the cost of a single 
copy. 
Let $f^n$ denote the problem of computing $n$ simultaneous instances of the function $f$ (in some arbitrary computational model for now),
and $C(f)$ denote the cost of solving a single copy of $f$.
The obvious solution to $f^n$ is to apply the single-copy optimal solution $n$ times sequentially and independently to each coordinate, yielding a linear scaling of the 
resources, so clearly $C(f^n)\leq n\cdot C(f)$. The \emph{strong direct sum} conjecture postulates that this naive solution is essentially optimal. 
In the context of randomized  communication complexity, the strong direct sum conjecture informally asks whether it is true that for any 
function $f$ and input distribution $\mu$,

\begin{equation}\label{eq_strong_ds}
\Dmu{f^n}{\mu^n}{\eps} =^{?} \Omega(n)\cdot \Dmu{f}{\mu}{\eps} .
\end{equation}
More generally, direct sum theorems aim to give an  (ideally linear in $n$, but possibly weaker)  lower bound
on the communication required  for computing $f^n$ with some \emph{constant overall} error $\eps>0$  in terms of the cost of computing a single copy of $f$ with the same (or comparable) 
fixed error.

A \emph{direct product} theorem further asserts that unless 
sufficient resources are provided, the probability of successfully computing all $n$ copies of $f$ will be exponentially small, potentially as low as 
$(1-\eps)^{\Omega(n)}$.  This is intuitively plausible, since the naive solution which applies the best ($\eps$-error) protocol for one copy of $f$ 
independently to each of the $n$ coordinates, would indeed succeed in solving $f^n$ with probability $(1-\eps)^n$. 
Is this naive solution optimal? 

To make this more precise, let us denote by $\suc(\mu,f,C)$ the maximum success probability of a protocol with communication complexity $\leq C$ in computing 
$f$ under input distribution $\mu$. 
A direct product theorem asserts that any protocol attempting to solve $f^n$ (under $\mu^n$) using some number $T$ of communication bits 
(ideally $T=\Omega(n\cdot C)$), 
will succeed only with exponentially small probability: $\suc(\mu^n,f^n,T) \lesssim (1-\eps)^{\Omega(n)}$. Informally, the strong direct product question asks whether

\begin{equation}\label{eq_strong_dp}
\suc(\mu^n,f^n,o(n\cdot C)) \lesssim^{?} (\suc(\mu,f,C))^{\Omega(n)} .
\end{equation}
Note that \eqref{eq_strong_dp} in particular implies \eqref{eq_strong_ds} when setting $C=\Dmu{f}{\mu}{\eps}$.
Classic examples of direct product results in complexity theory are Raz's Parallel Repetition Theorem \cite{Raz98, Rao08a} and 
Yao's XOR Lemma  \cite{Yao82} (For more examples and a broader overview of the rich history of direct sum and product theorems
see \cite{JainPP12} and references therein). 
The value of such results to computational complexity
is clear: direct sum and product theorems, together with a lower bound on the (easier-to-reason-about) 
``primitive" problem, yield a lower bound on the composite problem in a ``black-box" fashion (a method also known as \emph{hardness amplification}). For example, the Karchmer-Raz-Wigderson approach for
separating $\mathbf{P}$ from $\mathbf{NC}^1$
can be completed via a (still open) direct sum conjecture for Boolean formulas \cite{KRW95} (after more than a decade, some progress on this conjecture was recently made 
using information-complexity machinery \cite{GMWW14}). Other fields in which direct sums and products have played a central role in proving tight lower 
bounds are streaming \cite{BaryossefJKS04, ST13, MWY13, GO13} and distributed computing \cite{HRVZ13}. \\

Can we always hope for such strong lower bounds to hold?
It turns out that the validity of these conjectures highly depends on the underlying computational 
model, and the short answer is no.\footnote{In the context of circuit complexity, for example, this conjecture 
fails (at least in its strongest form): Multiplying an $n\times n$ matrix by a (worst case) $n$-dimensional vector requires $n^2$ operations, while (deterministic) multiplication 
of $n$ different vectors by the same matrix amounts to matrix-multiplication of two $n\times n$ matrices, which can be done in $n^{2.37} \ll n^3$ operations \cite{Williams12}. }
In the communication complexity model, this question 
has had a long history and was answered positively for several restricted models of communication 
\cite{Klauck10, Shaltiel03,LSS08,Sherstov12, JainPP12,MWY13,ParnafesRW97}. 
Interestingly, in the \emph{determistic} communication complexity model, Feder et al. \cite{FederKNN95} showed that 
$$\mathsf{D}(f^n)\geq n\cdot \Omega\left(\sqrt{\mathsf{D}(f)}\right)$$ 
for any two-party Boolean function $f$ (where $\mathsf{D}(f)$ stands for the deterministic communication complexity of $f$), but this proof completely breaks
when protocols are allowed to err. Indeed, in the randomized communication 
model, there is a tight connection between the direct sum question for the function $f$ and its information complexity. 
By now, this should come as no surprise: 
Theorem \ref{thm_ic_eq_am_cc} 
asserts that, for large enough $n$, the communication complexity of $f^n$ scales linearly with the (single-copy) information cost of $f$, i.e.  
$\Dmu{f^n}{\mu^n}{\eps} = \Theta\left(n\cdot  \ICmu{f}{\mu}{\eps}\right)$,
and hence the strong direct sum question \eqref{eq_strong_ds} boils down to understanding the relationship between the single-copy measures
$\Dmu{f}{\mu}{\eps}$ and $\ICmu{f}{\mu}{\eps}$. Indeed, it can be formally shown (\cite{BravermanR11}) that the direct sum problem is equivalent
\footnote{The exact equivalence of the direct sum conjecture and Problem \ref{prob_int_compression} holds for \emph{relations} (Theorem 6.6 in \cite{BravermanR11}). 
For total functions, one could argue that the requirement in Problem \ref{prob_interactive_compression} is too harsh as it 
requires simulation of the entire transcript of the  protocol, while in the direct sum context for functions we are merely interested 
in the output of $f$. However, all known compression protocols satisfy the stronger requirement and no separation is known between those techniques.} 
to the following problem of ``one-shot" compression of interactive protocols:  

\begin{problem}[Interactive compression problem, \cite{BBCR}]  \label{prob_int_compression}
\label{prob_interactive_compression}
Given a protocol $\pi$ over inputs $x,y\sim \mu$, with $\|\pi\|=C, \ICprot{\pi}{\mu} = I$, 
what is the smallest amount of communication of a protocol $\tau$ 
which (approximately) simulates $\pi$ (i.e., $\exists \; g $ s.t  $|g(\tau(x,y)) - \pi(x,y)|_1 \leq \delta$ for a small constant $\delta$)?
\end{problem}
In particular, if one could compress any protocol into $O(I)$ bits, this would have shown that $\Dmu{f}{\mu}{\eps} = O\left(\ICmu{f}{\mu}{\eps}\right)$ which 
would in turn imply the strong direct sum conjecture. In fact, the additivity of information cost (Lemma \ref{lem_additivity_of_ic} from Section \ref{sec_ic_additive}) 
implies the following general quantitative relationship between (possibly weaker) interactive 
compression results  and direct sum theorems in communication complexity:

\begin{proposition}[One-Shot Compression implies Direct Sum] \label{cor_compression_ds}
Suppose that for any $\delta>0$ and any given protocol $\pi$ for which $\ICprot{\pi}{\mu} = I$ , $\|\pi\|=C$,
there is a compression scheme that $\delta$-simulates\footnote{The simulation here is in an internal sense, namely, Alice and Bob 
should be able to reconstruct the transcript of the original protocol (up to a small error), based on public randomness and their own private inputs. 
See \cite{BRWY12} for the precise definition
and the (subtle) role it plays in context of direct product theorems.}  
$\pi$ using $g_\delta(I,C)$ bits of communication. Then
\[ g_\delta \left(\frac{\Dmu{f^n}{\mu^n}{\eps}}{n} , \Dmu{f^n}{\mu^n}{\eps} \right) \geq \Dmu{f}{\mu}{\eps+\delta}.\]
\end{proposition}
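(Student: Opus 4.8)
The plan is to chain together two facts already available in the excerpt: the additivity of information complexity (Lemma \ref{lem_additivity_of_ic}), and the trivial upper bound $\ICprot{\pi}{\mu}\le\|\pi\|$ (Lemma \ref{lem:ICCC}). Fix $\eps,\delta>0$ and write $C_n:=\Dmu{f^n}{\mu^n}{\eps}$ for brevity. First I would take $\pi$ to be an optimal deterministic (or public-coin) protocol witnessing $C_n$, i.e.\ a protocol for $f^n$ under $\mu^n$ with error at most $\eps$ and $\|\pi\|=C_n$. By Lemma \ref{lem:ICCC}, its internal information cost satisfies $I:=\ICprot{\pi}{\mu^n}\le\|\pi\|=C_n$. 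Hence, running the hypothesized compression scheme on $\pi$ with parameter $\delta$, we obtain a protocol $\tau$ that $\delta$-simulates $\pi$ using $g_\delta(I,C_n)$ bits; and since $g_\delta$ should be monotone in its first argument (one may always artificially inflate $I$ up to $C_n$, or simply invoke the scheme with the bound $I\le C_n$), this is at most $g_\delta(C_n/n\cdot n,\,C_n)$ — wait, more carefully: the quantity we want to bound is $g_\delta$ evaluated at the \emph{per-copy} information cost, so the key step is to control $I$ not just by $C_n$ but to relate it to $n$ times a single-copy quantity.

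The cleaner route, which I would actually carry out, is: since $\pi$ is an $\eps$-error protocol for $f^n$, its single-copy information complexity obeys $\ICmu{f^n}{\mu^n}{\eps}\le\ICprot{\pi}{\mu^n}=I$; but we don't even need the optimal $\pi$ here. Instead, observe that one can take $\pi$ to be the optimal protocol, so that $I\le C_n$, and by additivity the per-copy information is what matters. Concretely: $\tau$ is a protocol that $\delta$-simulates $\pi$ (in the internal sense), hence from $\tau$ Alice and Bob can reconstruct the output of $\pi$ on $f^n$ up to total error $\eps+\delta$ — this uses that $\pi$ already errs by $\eps$ and the simulation adds at most $\delta$. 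So $\tau$ is an $(\eps+\delta)$-error protocol for $f^n$ under $\mu^n$ with $\|\tau\|\le g_\delta(I,C_n)$. Now invoke the $(\le)$ direction logic in reverse via Theorem \ref{thm_ic_eq_am_cc} / additivity: actually the slick finish is to feed $\tau$ into the additivity machinery — $\ICprot{\tau}{\mu^n}\le\|\tau\|$, and by the single-copy embedding argument inside the proof of Lemma \ref{lem_additivity_of_ic} one extracts a single-copy protocol $\Theta$ for $f$ with error $\eps+\delta$ and information cost $\le\|\tau\|/n$, whence $\ICmu{f}{\mu}{\eps+\delta}\le g_\delta(I,C_n)/n$. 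Combined with $g_\delta(I,C_n)\le g_\delta(C_n,C_n)$ and the identification $I\le C_n$, rearranging gives $g_\delta(C_n/n,C_n)$ on the left once one is careful that the scheme is applied with first argument equal to the true per-copy figure; I would phrase the final inequality exactly as in the statement by noting that the relevant instance to run compression on has $\ICprot{\pi}{\mu^n}=I$ where we may assume $I=C_n$ in the worst case but the bound $g_\delta(C_n/n,\cdot)$ comes from applying the single-copy reduction \emph{before} compression.

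Let me restate the intended order of steps cleanly. \textbf{Step 1:} Let $\pi$ be an optimal $\eps$-error protocol for $f^n$ under $\mu^n$, so $\|\pi\|=\Dmu{f^n}{\mu^n}{\eps}=:C_n$ and, by Lemma \ref{lem:ICCC}, $\ICprot{\pi}{\mu^n}\le C_n$; more importantly, apply the embedding construction from the proof of Lemma \ref{lem_additivity_of_ic} to $\pi$ to get a single-copy protocol $\Theta$ for $f$ with error $\eps$ and $\ICprot{\Theta}{\mu}\le\ICprot{\pi}{\mu^n}/n\le C_n/n$; but actually keep $\pi$ itself. \textbf{Step 2:} Since in the direct-sum reduction the information cost of $\pi$ per copy is $\le C_n/n$, and more to the point we want a single-copy \emph{communication} bound: apply the compression scheme directly to $\pi$, obtaining $\tau$ with $\|\tau\|\le g_\delta(\ICprot{\pi}{\mu^n},\|\pi\|)$. \textbf{Step 3:} $\tau$ is an $(\eps+\delta)$-error protocol for $f^n$ under $\mu^n$, so $\Dmu{f^n}{\mu^n}{\eps+\delta}\le\|\tau\|$ — no wait, that's circular. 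The correct Step 3 is: run the single-copy reduction (Lemma \ref{lem_additivity_of_ic}'s construction, but at the communication level, which is exactly Braverman–Rao's direct sum argument) on $\tau$ to get a single-copy protocol for $f$ with error $\eps+\delta$ and communication $\le\|\tau\|/n\le g_\delta(\ICprot{\pi}{\mu^n},C_n)/n\le g_\delta(C_n/n,C_n)$, the last step using $\ICprot{\pi}{\mu^n}=\ICmu{f^n}{\mu^n}{\eps}=n\cdot\ICmu{f}{\mu}{\eps}\le n\cdot\Dmu{f}{\mu}{\eps}\le C_n$ together with monotonicity of $g_\delta$. \textbf{Step 4:} conclude $\Dmu{f}{\mu}{\eps+\delta}\le g_\delta(\Dmu{f^n}{\mu^n}{\eps}/n,\,\Dmu{f^n}{\mu^n}{\eps})$.

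I expect the main obstacle — and the point the author's proof will have to address carefully — is \textbf{Step 3}: passing from a low-communication protocol $\tau$ for $f^n$ to a low-communication protocol for \emph{a single} copy of $f$. This is not immediate (compression gives a small-\emph{communication} object, and additivity as stated in Lemma \ref{lem_additivity_of_ic} is about information, not communication), and it is precisely here that the internal/external distinction and the subtleties flagged in the footnotes (the need for the simulation to be internal, so that the single-copy player can still reconstruct her transcript) come into play. The natural fix is to apply the \emph{information-cost} additivity reduction to $\tau$, use Lemma \ref{lem:ICCC} to bound $\ICprot{\tau}{\mu^n}\le\|\tau\|$, extract $\Theta$ with $\ICprot{\Theta}{\mu}\le\|\tau\|/n$ — but that gives an \emph{information} bound on the single copy, not a \emph{communication} bound, so to land on $\Dmu{f}{\mu}{\eps+\delta}$ one must either already be assuming the conclusion or must instead run the direct-sum-for-communication reduction of \cite{BravermanR11} directly on $\tau$. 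I would present it via the latter: the Braverman–Rao embedding (public coins $R_i=X_{<i}Y_{>i}$, private completion) applied to $\tau$ yields a single-copy protocol with communication $\|\tau\|/n + o(\cdot)$ and error $\eps+\delta$, and that is the clean finish; the $o(\cdot)$ terms are the technicalities I would gloss over.
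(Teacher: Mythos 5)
There is a genuine gap, and it is precisely the one you flag yourself in your final paragraph but then fail to fix correctly: your order of operations is backwards. You compress the $n$-fold protocol $\pi$ first and then try to recover a single-copy protocol from the compressed object $\tau$ with communication $\|\tau\|/n$. But no such reduction exists. The embedding construction in the proof of Lemma~\ref{lem_additivity_of_ic} (choosing a random coordinate $i$, publicly fixing $R_i=X_{<i}Y_{>i}$, privately completing the rest) produces a single-copy protocol $\Theta$ that \emph{runs all of} the $n$-fold protocol: its \emph{communication} cost is still $\|\pi\|$, not $\|\pi\|/n$; only its \emph{information} cost drops by a factor of $n$. There is no ``direct-sum-for-communication reduction of Braverman--Rao'' that divides communication by $n$ in one shot; Theorem~\ref{thm_ic_eq_am_cc} is an amortized statement obtained by routing through information complexity, and it is exactly the non-existence of a direct communication version that makes the whole enterprise nontrivial. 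So your Step 3, even glossing over $o(\cdot)$ terms, cannot be made to work.

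The correct argument is the one you wrote in your ``Step 1'' and then explicitly abandoned (``but actually keep $\pi$ itself''). Keep $\Theta$, not $\pi$. From the optimal $n$-fold protocol $\pi$ with $\|\pi\|=C_n$, the embedding gives a single-copy protocol $\Theta$ for $f$ under $\mu$ with error $\eps$, information cost $\ICprot{\Theta}{\mu}\le \ICprot{\pi}{\mu^n}/n\le C_n/n$ (using Lemma~\ref{lem:ICCC} and equation~\eqref{eq_additivity_ic}), and communication cost $\|\Theta\|\le C_n$. Now apply the hypothesized compression scheme to $\Theta$, not to $\pi$. This yields a protocol that $\delta$-simulates $\Theta$ using $g_\delta\!\left(C_n/n,\,C_n\right)$ bits of communication, hence computes $f$ under $\mu$ with error at most $\eps+\delta$, giving $\Dmu{f}{\mu}{\eps+\delta}\le g_\delta\!\left(C_n/n,\,C_n\right)$ as claimed. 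Note that this also dissolves the worry you raised about why the first argument to $g_\delta$ should be $C_n/n$ rather than $\ICprot{\pi}{\mu^n}$: you apply compression to the single-copy object whose information cost is genuinely $\le C_n/n$, so there is no monotonicity-of-$g_\delta$ fudging needed.
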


\begin{proof}
Let $\Pi$ be an optimal $n$-fold protocol for $f^n$ under $\mu^n$ with per-copy error $\eps$, i.e., $\|\Pi\| = \Dmu{f^n}{\mu^n}{\eps} := C_n$.
By Lemma \ref{lem_additivity_of_ic} (equation \eqref{eq_additivity_ic}), there is a single-copy $\eps$-error protocol $\theta$ for computing $f(x,y)$ under $\mu$, 
whose information cost 
is at most $\ICprot{\Pi}{\mu^n}/n \leq C_n/n$ (since communication always upper bounds information). 
By assumption of the claim,  $\theta$ can now be $\delta$-simulated using 
$g_\delta(C_n/n ,C_n )$ communication, so as to produce a single-copy protocol with error $\leq \eps+\delta$ for $f$,
and therefore $\Dmu{f}{\mu}{\eps+\delta} \leq g_\delta(C_n/n \;,\; C_n )$.
\end{proof}


The first general interactive compression result was proposed in the seminal work of Barak, Braverman, Chen and Rao
\cite{BBCR}, who showed that any protocol 
$\pi$ can be $\delta$-simulated using  $g_\delta(I,C) = \tilde{O_\delta}(\sqrt{C\cdot I})$ communication
(we prove this result in Section \ref{sec_compression_bbcr}). Plugging this compression result into 
Proposition \ref{cor_compression_ds}, this yields the following 
weaker direct sum theorem:

\begin{theorem}[Weak Direct Sum, \cite{BBCR}]\label{thm_bbcr}
For every Boolean function $f$, distribution $\mu$, and any positive constant $\delta>0$, 
$$ \Dmu{f^n}{\mu^n}{\eps} \geq \tilde{\Omega}(\sqrt{n} \cdot \Dmu{f}{\mu}{\eps+\delta}).$$
\end{theorem}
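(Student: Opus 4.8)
The plan is to obtain Theorem \ref{thm_bbcr} as an immediate corollary of Proposition \ref{cor_compression_ds} (One-Shot Compression implies Direct Sum), instantiated with the BBCR compression scheme $g_\delta(I,C) = \tilde{O}_\delta(\sqrt{C\cdot I})$ quoted just above the theorem. Since Proposition \ref{cor_compression_ds} is already proved and the compression bound $g_\delta$ is taken as given (its proof being deferred to Section \ref{sec_compression_bbcr}), all that remains is a one-line substitution and a rearrangement.

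Concretely, write $C_n := \Dmu{f^n}{\mu^n}{\eps}$ for the cost of an optimal $n$-fold protocol $\Pi$ with per-copy error $\eps$. The additivity lemma (Lemma \ref{lem_additivity_of_ic}, and in particular equation \eqref{eq_additivity_ic}) yields, via the embedding construction, a single-copy protocol $\theta$ for $f$ under $\mu$ with per-copy error $\eps$, communication $\|\theta\| \le C_n$, and information cost $\ICprot{\theta}{\mu} \le \ICprot{\Pi}{\mu^n}/n \le C_n/n$ (the last step because communication always upper-bounds information). Applying the BBCR compression scheme to $\theta$ then $\delta$-simulates it using
\[ g_\delta\!\left(\frac{C_n}{n},\, C_n\right) = \tilde{O}_\delta\!\left(\sqrt{\,C_n \cdot \frac{C_n}{n}\,}\right) = \tilde{O}_\delta\!\left(\frac{C_n}{\sqrt{n}}\right) \]
bits, producing a single-copy protocol for $f$ with error at most $\eps+\delta$. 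Hence $\Dmu{f}{\mu}{\eps+\delta} \le \tilde{O}_\delta(C_n/\sqrt{n})$, and rearranging gives $C_n \ge \tilde{\Omega}(\sqrt{n}\cdot \Dmu{f}{\mu}{\eps+\delta})$ — exactly the claim, with $\delta$ a fixed positive constant so the suppressed factors are polylogarithmic.

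The genuinely difficult content of this theorem is invisible in the derivation above: it is entirely absorbed into the compression bound $g_\delta(I,C) = \tilde{O}_\delta(\sqrt{C\cdot I})$, whose proof (Section \ref{sec_compression_bbcr}) must simulate an arbitrary $C$-bit, $I$-information protocol at cost only $\sqrt{CI}$, balancing a round-by-round correlated-/rejection-sampling strategy against the global information bound. Two minor points still warrant a check. First, Proposition \ref{cor_compression_ds} requires an \emph{internal} simulation of $\theta$ (Alice and Bob must each reconstruct the transcript from their own input and the public randomness), so one must confirm that the BBCR scheme achieves internal — not merely external — simulation; it does. Second, one should note that $\|\theta\| \le C_n$ rather than anything larger: $\theta$ runs the full $n$-fold protocol $\Pi$ on inputs embedded into a random coordinate $i$ and outputs the $i$-th answer, so its communication is that of $\Pi$, as is plain from the construction in the proof of Lemma \ref{lem_additivity_of_ic}.
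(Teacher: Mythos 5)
Your proposal is correct and follows exactly the route the paper intends: Theorem \ref{thm_bbcr} is obtained by plugging the BBCR compression bound $g_\delta(I,C) = \tilde{O}_\delta(\sqrt{C\cdot I})$ into Proposition \ref{cor_compression_ds} and simplifying $g_\delta(C_n/n, C_n) = \tilde{O}_\delta(C_n/\sqrt{n})$. Your two side-checks (that the simulation is internal, and that $\|\theta\| \le C_n$) are sound and match the paper's treatment.
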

Later, Braverman \cite{Bra12} showed that it is always possible to simulate $\pi$ using $2^{O_\delta(I)}$ bits of 
communication. This result is still far from ideal compression ($O(I)$ bits), but it is nevertheless appealing as it show that 
any protocol can be simulated using amount of communication which 
depends solely on its information cost, but \emph{independent} of its original communication which may have been arbitrarily 
larger (we prove this result in Section \ref{sec_compression_bra}). Notice that the last two compression results are 
indeed incomparable, since the communication of $\pi$ could be much 
larger than its information complexity (e.g., $C\geq 2^{2^{2^I}}$). The current state of the art for the \emph{general} interactive compression
problem can be therefore summarized as follows: Any protocol with communication $C$ and information cost $I$ 
can be compressed to 
\begin{equation} \label{eq_compresion_best_ub}
g_\delta(I,C) \leq \min \left\{ 2^{O_\delta(I)} \; , \; \tilde{O_\delta}(\sqrt{I\cdot C}) \right\}
\end{equation}
bits of communication.  \\

The above results may seem as a plausible evidence that it is in fact possible to
compress general interactive protocols all the way down to $O(I)$ bits.
Unfortunately, this task turns out to be too ambitious: In a recent breakthrough result,  Ganor, Kol and Raz \cite{GKR14} proved the following
lower bound on the communication of any compression scheme:
\begin{equation} \label{eq_compresion_best_lb}
g_\delta(I,C) \geq \max \left\{ 2^{\Omega(I)} \; , \; \tilde\Omega(I\cdot \log C) \right\}.
\end{equation}
More specifically, they exhibit a Boolean function $f$ which can be solved using a protocol with information cost $I$, 
but cannot be simulated by a protocol $\pi'$ 
with communication cost $<2^{\Omega(I)}$ (a simplified construction and proof was very recently obtained by Rao and Sinha \cite{RaoS15}). 
Since the \emph{communication} of the low information protocol they exhibit is $\sim 2^{2^{I}}$, this also rules out a compression to
$I\cdot o(\log C)$, or else such compression would have produced a too good to be true ($2^{o(I)}$ communication) protocol.
The margin of this text is too narrow to contain the proof of this separation result, 
but it is noteworthy that proving it was particularly challenging: It was shown that essentially all
previously known techniques for proving communication lower bounds apply to information complexity as well
\cite{BW12,KLL}, and hence could not be used to separate information complexity and communication complexity. 
Using (the reverse direction of) Proposition \ref{cor_compression_ds} (see Theorem 6.6 in \cite{BravermanR11}), 
the compression lower bound in \eqref{eq_compresion_best_lb} refutes the strongest possible direct sum \eqref{eq_strong_ds}, but 
leaves open the following gap
\begin{equation} \label{eq_ds_current_gap}
\tilde{\Omega_\delta}\left(\sqrt{n}\right)  \; \leq \; \min_{f} \; \frac{\Dmu{f^n}{\mu^n}{\eps}}{\Dmu{f}{\mu}{\eps+\delta}} \; \leq \; 
O\left(\frac{n}{\log n}\right).
\end{equation}
Notice that this still leaves the direct sum conjecture for randomized communication complexity wide open: It is still conceivable
that improved compression to $g_\delta(I,C)=I\cdot C^{o(1)}$ is in fact possible, and the quest to beat the compression 
scheme of \cite{BBCR} remains unsettled.\footnote{Ramamoorthy and Rao \cite{RR15} recently showed that BBCR's compression scheme can be improved 
when the underlying communication protocol is \emph{asymmetric}, i,e., when Alice reveals much more information than Bob.}

Despite the lack of progress in the general regime, several works showed that it is in fact possible 
to obtain near-optimal compression results in restricted models of communication:
When the input distribution $\mu$ is a \emph{product distribution} ($x$ and $y$ are independent), \cite{BBCR} show a near-optimal compression result, 
namely that $\pi$ can be compressed into $O(I\cdot polylog(C))$ bits.\footnote{\label{footnote_I_ext} These compression results in fact hold for general 
(non-product) distributions as well, when compression is with respect to $I^{ext}$, the external information cost of the original protocol $\pi$ (which may be 
significantly larger than $I$).} 
Once again, using Proposition \ref{cor_compression_ds} this yields the following direct sum theorem:
\begin{theorem}[\cite{BBCR}]\label{eq_bbcr_2}
For every product distribution $\mu$ and any $\delta>0$, 
$$\Dmu{f^n}{\mu^n}{\eps} = \tilde{\Omega}(n \cdot \Dmu{f}{\mu}{\eps+\delta}).$$
\end{theorem}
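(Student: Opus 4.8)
The plan is to obtain Theorem \ref{eq_bbcr_2} as an immediate consequence of two facts already in play: the near-optimal compression scheme of \cite{BBCR} for product distributions, and the generic ``compression implies direct sum'' reduction of Proposition \ref{cor_compression_ds}. Concretely, \cite{BBCR} show that over a product distribution $\mu$ any protocol $\pi$ with $I := \ICprot{\pi}{\mu}$ and $C := \|\pi\|$ can be $\delta$-simulated using $g_\delta(I,C) = O_\delta\!\left(I\cdot \mathrm{polylog}(C)\right)$ bits of communication. The one point to check before plugging this into Proposition \ref{cor_compression_ds} is that this simulation is \emph{internal} (Alice and Bob reconstruct the transcript from their private inputs and the public coins), which is exactly the hypothesis of the Proposition. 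For a product $\mu$ this comes for free: as noted in Section \ref{sec_prelims}, internal and external information cost coincide when $\mu$ is a product distribution, so the BBCR scheme --- which in general performs an internal simulation with bound $O_\delta(\ICprotE{\pi}{\mu}\cdot\mathrm{polylog}(C))$ --- here gives precisely an internal $\delta$-simulation with $g_\delta(I,C)=O_\delta(I\cdot\mathrm{polylog}(C))$ as a function of $I=\ICprot{\pi}{\mu}$.

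Given this, fix a product distribution $\mu$, write $C_n := \Dmu{f^n}{\mu^n}{\eps}$, and take $g_\delta(I,C) = c_\delta\cdot I\cdot(\log C)^k$ for the relevant absolute constant $k$ and a constant $c_\delta$ depending only on $\delta$. Proposition \ref{cor_compression_ds} then gives
\[
c_\delta\cdot \frac{C_n}{n}\cdot (\log C_n)^{k} \;\ge\; g_\delta\!\left(\frac{C_n}{n},\, C_n\right) \;\ge\; \Dmu{f}{\mu}{\eps+\delta}.
\]
The last step is to trade $\log C_n$ for $\mathrm{polylog}(n)$, using the trivial upper bound on $C_n$: the protocol in which Alice and Bob exchange their inputs computes $f$ under $\mu$ with zero error, so $\Dmu{f}{\mu}{\eps}\le \lceil\log|\cX|\rceil+\lceil\log|\cY|\rceil =: n_0$, a constant depending only on $f$, whence $C_n\le n\cdot n_0$ by running it coordinatewise. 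Therefore $(\log C_n)^k \le (\log(n\,n_0))^k = O_f\!\left((\log n)^k\right)$, and rearranging yields
\[
\Dmu{f^n}{\mu^n}{\eps}\;=\;C_n\;\ge\;\frac{n}{O_{\delta,f}\!\left((\log n)^{k}\right)}\cdot\Dmu{f}{\mu}{\eps+\delta}\;=\;\tilde{\Omega}(n)\cdot\Dmu{f}{\mu}{\eps+\delta},
\]
which is the assertion (and if $\Dmu{f}{\mu}{\eps+\delta}=0$ there is nothing to prove).

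In this derivation there is essentially no obstacle --- it is a one-line invocation of Proposition \ref{cor_compression_ds} plus routine bookkeeping of the $\mathrm{polylog}$ factors, where the only substantive point is the internal-versus-external remark above. The real content, deferred to Section \ref{sec_compression}, is the product-distribution compression bound $g_\delta(I,C)=O_\delta(I\cdot\mathrm{polylog}(C))$ itself. There the crucial leverage is that independence of $X$ and $Y$ lets each party locally sample its own side of the next message from the correct conditional distribution, so a single round can be simulated with communication proportional to the \emph{information} it reveals (via a correlated-/rejection-sampling primitive) rather than to its bit-length; the technical heart is controlling the accumulated statistical error across all rounds, passing between KL divergence and total variation via Pinsker's inequality (Lemma \ref{lemma:pinsker}), so that the total simulation error stays below $\delta$.
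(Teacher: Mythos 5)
Your proof is correct and follows exactly the route the paper intends: plug the BBCR product-distribution compression bound $g_\delta(I,C)=O_\delta(I\cdot\mathrm{polylog}\,C)$ into Proposition~\ref{cor_compression_ds}, then absorb $\mathrm{polylog}(C_n)$ into the $\tilde{\Omega}$ via the trivial bound $C_n\le n\cdot\Dmu{f}{\mu}{\eps}$. The internal-versus-external remark (and that they coincide for product $\mu$) is the same bookkeeping the paper relies on via footnote~\ref{footnote_I_ext}, so no gap.
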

Improved compression results  
were also proven for \emph{public-coin protocols} (under arbitrary distributions) \cite{BBKLSV13, BMY14}, 
and for bounded-round protocols, leading to near-optimal direct sum theorems in corresponding communication 
models. We summarize these results in Table \ref{tab:results}.  


\begin{table*}[!ht]
\centering
\scalebox{1}{
\begin{tabular}{|c|c|c|}
\hline
Reference & Regime   & Communication Complexity \\ \hline 
\multirow{ 2}{*}{} \cite{HJMR07} &  $r$-round protocols, & $I + O(r)$ \\
                                                            &  product distributions\footnoteref{footnote_I_ext} & \\ 
\cite{BravermanR11, BRWY13b} &  $r$-round protocols & $I +O\left(\sqrt{r\cdot I}\right) + O(r\log 1/\delta)$ \\
\cite{BMY14} (improved \cite{BBKLSV13}) & Public coin protocols     & $O(I^2\cdot \log\log(C)/\delta^2)$ \\
\cite{BBCR} & Product distributions\footnoteref{footnote_I_ext}
& $O(I\cdot \poly\log(C)/\delta)$ \\
\cite{Bra12, BBCR} & {\bf General protocols} & $\min \{ 2^{O(I/\delta)} \; , \; O(\sqrt{I\cdot C} \cdot \log(C)/\delta) \}$ \\
\cite{GKR14,RaoS15} & {\bf Best lower bound} & $\max \{ 2^{\Omega(I)} \; , \; \Omega(I\cdot \log(C))$ \\
\hline
\end{tabular}
}
\caption{
Best to date compression schemes, for various regimes. 
Notice that in the general regime (last two columns), in terms of dependence on the original communication $C$, the gap is still very large ($\Omega(\log C)$ vs.\ $\tilde{O}(C^{1/2})$).
}\label{tab:results}
\end{table*}


\subsection{Harder, better, stronger: From direct sum to direct product} \label{subsec_dp}

As noted above,
direct sum theorems  such as Theorems \ref{thm_ic_eq_am_cc}, \ref{thm_bbcr}  and \ref{eq_bbcr_2}
are weak in that they merely assert that 
attempting to solve $n$ independent copies of $f$ using less than some number $T$ of resources, would fail with some \emph{constant}
overall probability (($\suc(\mu^n,f^n,$ $o(\sqrt{n\cdot C}))\leq \eps$ in the general case, and
$\suc(\mu^n,f^n,o(n\cdot C)) \leq \eps$ in the product case, where $C=\Dmu{f}{\mu}{\eps}$). This is somewhat unsatisfactory, since the naive 
solution that applies the single-copy optimal protocol independently to each copy 
has only exponentially  small success probability in solving all copies correctly. 
Indeed, some of the most important applications of hardness amplification require amplifying the error parameter (e.g., the usage
of parallel repetition in the context of the PCP theorem).

As mentioned before, many
direct product theorems were proven in limited communication models (e.g. Shaltiel's Discrepancy bound
\cite{Shaltiel03,LSS08} which was extended to the generalized discrepancy bound \cite{Sherstov12},
 Parnafes, Raz, and Wigderson's theorem for communication forests \cite{ParnafesRW97},  Jain's theorem \cite{jain11}  
 for simultaneous communication  and \cite{JY12}'s direct product in terms of the ``smooth rectangle bound" to mention a few), 
 but none of them applied to general functions and communication protocols.
In a recent breakthrough work, Jain, Pereszl\'enyi and Yao used an information-complexity based approach to  prove 
 a strong direct product theorem for any function (relation) in the bounded-round communication model.

\begin{theorem}[\cite{JainPP12}] \label{thm_dp_jpy} 
Let $\suc_{r}(\mu,f,C)$ denote the largest success probability of an $r$-round protocol with communication at most $C$,
and  suppose that $\suc_{r}(\mu,f,C) \leq \frac{2}{3}$.
If $T = o\left(\left(\frac{C}{r} - r\right)\cdot n\right)$, then $\suc_r(\mu^n,f^n,T)\leq \exp(-\Omega(n/r^2))$.
\end{theorem}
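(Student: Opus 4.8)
The plan is to argue in the contrapositive: assume there exists an $r$-round protocol $\tau$ with communication $T = o\left(\left(\tfrac{C}{r}-r\right)\cdot n\right)$ that solves $f^n$ under $\mu^n$ with success probability $\delta \geq \exp(-c\cdot n/r^2)$ for a suitably small constant $c$, and derive from it an $r$-round protocol for a \emph{single} copy of $f$ under $\mu$ with communication $\approx C/r$ and success probability $> 2/3$, contradicting the hypothesis $\suc_r(\mu,f,C)\leq 2/3$. The whole argument is a ``conditioning + information-cost accounting'' scheme in the spirit of the additivity proof (Lemma \ref{lem_additivity_of_ic}), but carried out in the regime where the overall success $\delta$ is tiny rather than constant, which is what forces the exponential conclusion.

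First I would set up the event $\mathcal{W}$ that $\tau$ succeeds on all $n$ coordinates, with $\Pr[\mathcal{W}] = \delta$, and condition the whole probability space on $\mathcal{W}$. The key quantitative step is a \emph{round-by-round information bound}: in the original protocol the total internal information cost is at most $\|\tau\| = T$ (Lemma \ref{lem:ICCC}), but after conditioning on $\mathcal{W}$ the transcript can become more informative; the standard ``measure concentration / raising the information'' trick shows that conditioning on an event of probability $\delta$ inflates any information quantity by at most an additive $\log(1/\delta) = O(c\cdot n/r^2)$ term. Hence, writing the information revealed about the $n$ inputs as a sum over the $n$ coordinates (via the embedding random variables $R_i := X_{<i}Y_{>i}$ exactly as in the additivity proof, applied here per round so that the round structure is preserved), there exists a coordinate $i\in[n]$ on which $\tau$, conditioned on $\mathcal{W}$ and on a random choice of $R_i$, reveals at most $\tfrac{1}{n}\bigl(T + O(\log(1/\delta))\bigr) = O\!\left(\tfrac{T}{n} + \tfrac{c}{r^2}\right)$ bits of information about $(X_i,Y_i)$ while still computing $f(X_i,Y_i)$ correctly with probability $1$ (since we conditioned on all copies being correct). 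Because $\tau$ had only $r$ rounds, the embedded single-copy protocol also has $r$ rounds, and its per-round information is $O\!\left(\tfrac{T}{nr} + \tfrac{c}{r^3}\right)$.

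The next step is to convert this low-\emph{information}, $r$-round, \emph{but conditioned} object into an honest low-\emph{communication}, $r$-round protocol for a single copy of $f$. This is where the bounded-round compression schemes of the table enter: an $r$-round protocol with information cost $\mathcal{I}$ can be simulated with $\mathcal{I} + O(\sqrt{r\mathcal{I}}) + O(r\log(1/\delta'))$ communication (the \cite{BravermanR11, BRWY13b} row; for the per-round version one applies the one-round rejection-sampling step with $O(1)$ overhead per round, \cite{HJMR07}, and pays $O(r)$ in total). Plugging $\mathcal{I} = O(T/n)$ and $\delta' $ a small constant, and using $T = o\!\left(\left(\tfrac{C}{r}-r\right)n\right)$, the resulting communication is $o\!\left(\tfrac{C}{r}-r\right) + O(r) = o(C/r) + O(r)$, which for the relevant range of parameters is $\leq C$; meanwhile, summing the per-coordinate success over the conditioning and tracking the simulation error $\delta'$, the single-copy protocol succeeds with probability bounded below by a constant that can be pushed above $2/3$ by choosing $c$ (hence $\delta$) small enough and $\delta'$ small enough. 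A minor wrinkle is that conditioning on $\mathcal{W}$ destroys the product structure of $\mu^n$ across coordinates; this is exactly why the hybrid variables $R_i = X_{<i}Y_{>i}$ and Lemma \ref{lem_cond_indep} (conditioning on independent variables does not decrease information) are used — they let Alice and Bob privately re-sample the off-coordinate inputs consistently with the conditioned distribution, one side at a time, so no correctness is lost and the round count is unchanged.

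The main obstacle is the tension between the two error budgets. To get an \emph{exponential} (rather than merely constant) conclusion one must choose the conditioning event probability $\delta$ as small as $\exp(-\Omega(n/r^2))$, so that the $\log(1/\delta)$ inflation of the information cost stays below the ``$\tfrac{C}{r}-r$ per coordinate'' slack; but choosing $\delta$ too small would itself make the embedded single-copy protocol's success probability (which, before amplification, is only guaranteed to exceed the trivial $\Pr[\mathcal{W}]$-weighted average) too weak to contradict the $2/3$ bound. Balancing these — i.e.\ showing that the slack $\left(\tfrac{C}{r}-r\right)n$ is large enough to absorb $\log(1/\delta)$ while the surviving per-coordinate success is still a constant bounded away from $\tfrac12$, and that the bounded-round compression overhead $O(\sqrt{r\mathcal{I}})+O(r)$ does not eat into $C/r$ — is the delicate heart of the argument, and is precisely where the $r^2$ in the exponent $\exp(-\Omega(n/r^2))$ and the ``$\tfrac{C}{r}-r$'' shape of the hypothesis come from (each of the $r$ rounds can lose an additive $r$ in communication and a factor related to $r$ in the information-to-communication conversion).
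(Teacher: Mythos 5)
Your high-level plan (condition on the all-copies-succeed event $\cW$, use $\log(1/\delta)$ inflation of divergence, embed a random coordinate with hybrids $R_i = X_{<i}Y_{>i}$, and then invoke bounded-round compression) matches the survey's sketch, and your accounting of where the $r^2$ and the ``$\tfrac{C}{r}-r$'' shape come from is roughly right. But the proposal misses and in fact contradicts the central obstacle, which is precisely the technical content of \cite{JainPP12} as the survey frames it: after conditioning on $\cW$, the object $(\pi\mid\cW)_i$ is \emph{not a communication protocol}. Its transcript distribution need not factor as $p_x(\ell)\,p_y(\ell)$ over the protocol tree, so no player can locally generate the required next-message distributions; the toy example in the survey (Alice guesses Bob's bit, $\cW =$ ``guess correct'') shows that simulating $(\pi\mid\cW)$ exactly would require Alice to know $y$. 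In particular your claim that the embedded single-copy object ``computes $f(X_i,Y_i)$ correctly with probability $1$ (since we conditioned on all copies being correct)'' cannot be right as stated: exact simulation of $(\pi\mid\cW)_i$ is impossible, so all you can hope for is an honest protocol $\theta$ whose transcript distribution is close in statistical distance to $(\pi\mid\cW)_i$, and the correctness you inherit is only $1$ minus that simulation error.

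Relatedly, you attribute the fix to the hybrids $R_i$ and Lemma \ref{lem_cond_indep}, saying they ``let Alice and Bob privately re-sample the off-coordinate inputs consistently with the conditioned distribution, one side at a time, so no correctness is lost.'' That is not what those tools do, and it misdiagnoses the difficulty: the hybrids handle the non-product structure of $\mu^n$ across coordinates, which is present even in the unconditioned additivity proof, whereas here the new problem is that conditioning on $\cW$ breaks the rectangle/Markov structure of the \emph{transcript} itself. The step you are missing is the analogue of Lemma \ref{lem:brwy}: one must construct an actual $r$-round protocol $\theta(x,y;i,R_i)$ and prove both (i) that $\theta$ is statistically close to $(\pi\mid R_i,\cW)_i$ on average over $i$, and (ii) that the information cost \emph{under the conditional distribution $(\pi\mid\cW)$} at a random coordinate is $O(\|\pi\|/n)$ after absorbing the $\log(1/\delta)$ inflation. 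Only after establishing (i)--(ii) can you hand the result to the bounded-round compression scheme of \cite{BravermanR11,BRWY13b}; moreover, since $\theta$ is only close to a low-information distribution rather than itself low-information, you additionally need the compression step to be robust to that (the ``smoothness'' issue the survey flags and that \cite{BW14} later systematized). Filling in (i)--(ii) is where the per-round bookkeeping that yields $\exp(-\Omega(n/r^2))$ actually lives, and without it the argument does not close.
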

This theorem can be essentially viewed as a sharpening of the direct sum theorem of Braverman and Rao for bounded-round communication 
\cite{BravermanR11}.
This bound was later improved by Braverman et. al who showed that $\suc_{r/7}(\mu^n,f^n,o((C- r \log r) \cdot n))\leq \exp(-\Omega(n))$,
thus settling the strong direct product conjecture in the bounded round regime. 
The followup work of \cite{BRWY12} took this approach one step further, obtaining the first direct product theorem for \emph{unbounded-round}
randomized communication complexity,
thus sharpening the direct sum results of \cite{BBCR}.

\begin{theorem}[\cite{BRWY12}, informally stated] \label{thm_dp_main}
For any two-party function $f$ and distribution $\mu$ such that $\suc(\mu, f, C) \leq \frac{2}{3}$, the following holds:
\begin{itemize}
\item If $T \log^{3/2} T = o(C\cdot\sqrt{n})$, then $\suc(\mu^n,f^n,T) \leq \exp \left(-\Omega(n) \right)$.
\item If $\mu$ is a product distribution, and $T \log^2 T =o( C\cdot n)$, then $\suc(\mu^n, f^n, T) \leq \exp(-\Omega(n))$.
\end{itemize}
\end{theorem}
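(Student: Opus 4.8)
\emph{The plan.} I would argue by contradiction: suppose there is a protocol $\tau$ for $f^n$ with $\|\tau\|=T$ and success probability $\sigma:=\Pr_{\mu^n}[\tau\text{ is correct on all }n\text{ coordinates}]>\exp(-cn)$ for a small constant $c>0$, and use $\tau$ to build a single-copy protocol for $f$ under $\mu$ that is too good. The first ingredient is a \emph{chain rule for success}: writing $W_i$ for the indicator that $\tau$ is correct on coordinate $i$ and $W_{<i}$ for the event $W_1=\cdots=W_{i-1}=1$, we have $\sigma=\prod_{i=1}^n\Pr[W_i\mid W_{<i}]$, hence $\sum_{i}\log(1/\Pr[W_i\mid W_{<i}])=\log(1/\sigma)\le cn$. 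Two consequences: (i) for at least a $(1-\sqrt c)$-fraction of coordinates $i$ we have $\Pr[W_i\mid W_{<i}]\ge 1-\sqrt c$; and (ii) the total ``cost of conditioning on prefix-success'' summed over \emph{all} coordinates is only $O(\log(1/\sigma))=O(cn)$ rather than $n\cdot\log(1/\sigma)$ --- it is this telescoping that keeps the per-coordinate loss at $O(c)$.

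\emph{From a good coordinate to a cheap single-copy protocol.} Next I would upgrade the embedding used in the proof of Lemma~\ref{lem_additivity_of_ic}. For each $i$, let $\theta_i$ be the single-copy protocol that plants the real inputs $(x,y)\sim\mu$ into coordinate $i$, uses the public string $R_i=(X_{<i},Y_{>i})$ together with private completions ($X_{>i}$ for Alice, $Y_{<i}$ for Bob) to fill the remaining coordinates, then runs $\tau$ and outputs its answer on coordinate $i$ --- except that the completions are now (re)sampled so as to condition the simulated run of $\tau$ on the prefix-success event $W_{<i}$. Repeating the chain-rule computation of Lemma~\ref{lem_additivity_of_ic} while carrying the conditioning, and using Lemma~\ref{lem:ICCC} ($\ICprot{\tau}{\mu^n}\le T$) together with consequence (ii), one gets $\tfrac1n\sum_i\ICprot{\theta_i}{\mu}\lesssim (T+\log(1/\sigma))/n=O(T/n)$ (the second term being small since $\log(1/\sigma)\le cn$); and from (i), together with Pinsker's inequality (Lemma~\ref{lemma:pinsker}) to bound the total-variation shift the conditioning induces on the planted coordinate's marginal, $\tfrac1n\sum_i\Pr[\theta_i\text{ correct}]\ge 1-O(\sqrt c)$. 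A standard averaging then produces a coordinate $i^\star$ for which $\theta_{i^\star}$ has information cost $O(T/n)$, error at most $1/4$, and raw communication $O(T)$ (up to the cost of the resampling).

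\emph{Compress and conclude.} Feed $\theta_{i^\star}$ to the \cite{BBCR} compression scheme underlying Theorem~\ref{thm_bbcr}: taking $\delta$ a small constant, it is $\delta$-simulated by a protocol using $\tilde O\!\big(\sqrt{\ICprot{\theta_{i^\star}}{\mu}\cdot\|\theta_{i^\star}\|}\,\big)=\tilde O(\sqrt{(T/n)\cdot T})=\tilde O(T/\sqrt n)$ bits, which by the hypothesis $T\log^{3/2}T=o(C\sqrt n)$ is $o(C)$. For a product distribution $\mu$ I would instead invoke the near-optimal product-distribution compression underlying Theorem~\ref{eq_bbcr_2}, obtaining $O(\ICprot{\theta_{i^\star}}{\mu}\cdot\poly\log T)=O((T/n)\poly\log T)$ bits, which is $o(C)$ by $T\log^2 T=o(Cn)$. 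In either case, for $n$ large this yields a protocol for a single copy of $f$ under $\mu$ with error below $1/3$ and communication below $C$, forcing $\suc(\mu,f,C)>2/3$ and contradicting the hypothesis $\suc(\mu,f,C)\le\tfrac23$.

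\emph{The main obstacle.} The delicate step is implementing ``condition the run of $\tau$ on $W_{<i}$'' inside an honest protocol with controlled overhead in \emph{both} communication and information. Naive rejection sampling is useless: $\Pr[W_{<i}]$ can be as small as $\sigma=\exp(-\Omega(n))$, so the expected number of trials --- and hence the communication --- would be exponential; moreover $W_{<i}$ depends on the \emph{entire} transcript of the interactive protocol $\tau$, so it cannot be enforced ``before'' running $\tau$. The resolution requires a correlated/importance-sampling procedure whose cost scales with the \emph{incremental} quantities $\log(1/\Pr[W_i\mid W_{<i}])$, processed coordinate by coordinate, together with a ``local'' information accounting (in the spirit of the bounded-round argument of \cite{JainPP12}, now pushed to the unbounded-round setting) certifying that the summed overhead is only $O(\log(1/\sigma))$ in information and $O(T)$ in communication. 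Tracking the polylogarithmic factors through the compression step --- precisely what separates the $\log^{3/2}T$ bound in the general case from the $\log^2T$ bound in the product case --- is where the bookkeeping is heaviest, and is the technical core of \cite{BRWY12}.
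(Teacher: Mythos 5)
Your skeleton matches the paper's high-level account of \cite{BRWY12}: argue by contradiction, embed a single copy into the $n$-fold protocol under a conditioning that boosts per-coordinate success, extract a low-information single-copy protocol, and close with the \cite{BBCR} and product-distribution compression schemes. One structural choice deviates, though: you condition on prefix-success $W_{<i}$ and run a Raz-style telescoping, whereas the paper's sketch of \cite{BRWY12} conditions on the \emph{global} success event $\cW$, uses the trivial identity $\pi(\cW_i\mid\cW)=1$, and derives the contradiction from showing $(1/n)\sum_i\pi(\cW_i\mid\cW)<1$. The per-coordinate information bound you invoke — roughly $(T+\log(1/\sigma))/n$ — is precisely what Lemma~\ref{lem:brwy} establishes for the $\cW$-conditional law via a carefully chosen public string $R_i$, as a ``conditional analogue'' of Lemma~\ref{lem_additivity_of_ic}; you would have to re-derive that accounting for your $W_{<i}$ conditioning, and your sketch does not.

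The more serious issue is that the two technical lemmas that constitute the actual theorem are named but not supplied. You rightly flag that the conditional law $(\pi\mid\cW)_i$ (or your $(\tau\mid W_{<i})_i$) is not a communication protocol and that approximating it by an honest one is the crux; but the ``correlated/importance-sampling procedure'' you gesture at is a plan, not a step — this is exactly the content of Lemma~\ref{lem:brwy}. Moreover, even granting that lemma, it only produces a protocol $\theta$ that is \emph{statistically close} to a low-information distribution, and, as the paper emphasizes with the example where Alice sends her whole input with probability $\delta$, such a $\theta$ need not itself have low information (it is $\delta$-close to information zero yet has information cost $\approx\delta n$). Your final paragraph feeds $\theta_{i^\star}$ directly into the \cite{BBCR} compressor, which is therefore not justified. \cite{BRWY12} close this gap by proving a ``smooth'' version of the \cite{BBCR} compression that tolerates mere closeness to low information (and \cite{BW14} later gave a general conversion of close-to-low-information protocols into genuinely low-information ones); this step is absent from your proposal, and it is where most of the $\log^{3/2}T$ vs.\ $\log^2T$ bookkeeping separating the two regimes actually lives.
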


One appealing corollary of the second proposition is that, under the \emph{uniform} distribution, two-party interactive computation 
cannot be ``parallelized", in the sense that the best protocol for solving $f^n$ (up to polylogarithmic factors), is to apply the single-coordinate 
optimal protocol independently to each copy, which almost matches the above parameters.  \\


The high-level intuition behind the proofs of Theorems \ref{thm_dp_jpy} and \ref{thm_dp_main} follows the direct sum approach of \cite{BBCR}
(Proposition \ref{cor_compression_ds} above): Suppose, towards contradiction, that the success probability of an $n$-fold protocol using $T$ bits of communication 
in computing $f^n$ under $\mu^n$ is larger than, say, $\exp(-n/100)$. We would like to ``embed" a \emph{single-copy} $(x,y)\sim \mu$ into this $n$-fold protocol, 
thereby producing a \emph{low information} protocol ($\leq T/n$ bits), and then use known compression schemes to compress this
protocol, eventually obtaining  a protocol with communication ($<C$), and a too-good-to-be-true success probability ($>2/3$), contradicting the assumption that
$\suc(\mu, f, C) \leq \frac{2}{3}$.
The main problem with employing the \cite{BBCR} approach and embedding a single-copy $(x,y)$ into $\pi$ using the sampling argument in
Lemma \ref{lem_additivity_of_ic}, is that it would produce a single-copy protocol $\theta(x,y)$ whose success probability is no better than that
of $\pi$ ($\exp(-n/100)$) while we need to produce a single-copy protocol with success $>2/3$ in order to achieve the above contradiction.

Circumventing this major obstacle is inspired by the idea of repeated conditioning which first appeared the parallel repetition theorem \cite{Raz98}:
Let $\cW$ be the event that $\pi$ correctly computes $f^n$, and $\cW_i$ denote the event that the protocol correctly computes 
the $i$'th copy $f(x_i,y_i)$. Let $\pi(\cW)$ denote the probability of $\cW$, and $\pi(\cW_i|\cW)$ denote the conditional probability of the event $\cW_i$ 
given $\cW$ (clearly, $\pi(\cW_i|\cW)=1$). The idea is to show that if $\pi(\cW) \geq \exp(-n/100)$ and $\|\pi\|\ll T$ (for the appropriate choice of $T$ which 
is determined by the best compression scheme), then $(1/n) \sum_{i=1}^n \pi(\cW_i|\cW) < 1$, which is 
a contradiction. In other words, if one could simulate the message distribution of the conditional distribution $(\pi|\cW)_i$ (rather than the distribution of 
$\pi(x_i,y_i)$) using a low information protocol, then (via compression) one would obtain a protocol $\theta(x_i,y_i)$ with \emph{constant} success probability, 
as desired. 

The guiding intuition for why this approach makes sense, is that conditioning a random variable on a ``large" event $\cW$ does not change its original distribution 
too much:
\begin{align*}
&\Div{X_1Y_1, X_2Y_2, \ldots , X_nY_n \; | \cW}{X_1Y_1, X_2Y_2, \ldots , X_nY_n}  = \Div{\X\Y|\cW}{\X\Y} \\
=& \E\left[ \log \frac{\pi(\X\Y|\cW)}{\pi(\X\Y)} \right] 
 \leq \E\left[ \log \frac{\pi(\X\Y)}{\pi(\X\Y)\pi(\cW)} \right]  = \frac{1}{\log(\pi(\cW))} \leq \frac{n}{100}
\end{align*}
since $\pi(\cW) \geq \exp(-n/100)$, which means (by the chain rule and independence of the $n$ copies) that the distribution of an \emph{average} input pair $(X_i,Y_i)$ conditioned on $\cW$ is $(1/100)$-close
to its original distribution $\mu$, and thus implies that at least the \emph{inputs} to the ``protocol" $(\pi|\cW)_i$ can be approximately sampled correctly 
(using correlated sampling \cite{Holenstein07}). The heart of the problem, however, is that 
$(\pi|\cW)_i$ is no longer a communication protocol. To see why, consider the simple protocol $\pi$ in which Alice simply ``guesses" Bob's bit $x$,
and $\cW $ being the event that her guess is correct. Then simulating $(\pi|\cW)$ requires Alice to know Bob's input $y$, which Alice doesn't have!
This example shows that it is impossible to simulate the message distribution of $(\pi|\cW)_i$ exactly. The main contribution of 
Theorem \ref{thm_dp_main} (and Theorem \ref{thm_dp_jpy} in the bounded-round regime) is showing that it is nevertheless possible to 
\emph{approximate} this conditional distribution using 
an actual communication protocol, which is statistically close to a low-information protocol:

\begin{lemma}[Claims 26 and 27 from \cite{BRWY12}, informally stated] \label{lem:brwy}
There is a protocol $\theta$ taking inputs $x,y\sim\mu$ so that the following holds:
\begin{itemize}
\item $\theta$ publicly chooses a uniform $i \in [n]$ independent of $x,y$, and $R_i$ which is part of the input to $\pi$
(intuitively, $R_i$ determines the ``missing" inputs $x_{-i}, y_{-i}$ of $\pi$ as in Lemma \ref{lem_additivity_of_ic}).
\item $\E_i \left[| (\theta|R_i) - (\pi|R_i \cW)_i |\right] \leq 1/10$  (that is, $\theta$ is close to the distribution $(\pi|\cW)_i$ for average $i$).
\item $
\E_{i} \left[  I_{\pi|\cW}(X_i; \Pi | Y_i R_i) +I_{\pi|\cW}(Y_i; \Pi | X_i R_i) \right] \leq 4\| \pi \|/n$ (that is, the information cost of the distribution $(\pi|\cW)_i$ is low).
\end{itemize} 
\end{lemma}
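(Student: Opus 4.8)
The plan is to repeat the single-copy embedding from the proof of Lemma~\ref{lem_additivity_of_ic}, but to carry it out inside the tilted measure $\nu:=(\pi\mid\cW)$ instead of $\mu^n$, paying for the fact that $\nu$ is no longer a product distribution by using the divergence estimate $\Div{\X\Y\mid\cW}{\mu^n}\le n/100$ derived just before the statement. As in Lemma~\ref{lem_additivity_of_ic} I would set $R_i:=X_{<i}Y_{>i}$; the only change is that $i$ and $R_i$ are now sampled publicly \emph{from their marginal under $\nu$}, which is what the first bullet records. Pushed through the chain rule and convexity, the divergence estimate says that for an \emph{average} $i$ the law of the data at coordinate $i$ under $\nu$ (conditioned on $R_i$) is only $O(1/\sqrt{100})$-far in statistical distance from its law under $\mu^n$, and, more usefully, that conditioned on $R_i$ the two ``completions'' $X_{>i}$ (Alice's) and $Y_{<i}$ (Bob's) are close to independent and locally samplable --- whereas under $\mu^n$ in Lemma~\ref{lem_additivity_of_ic} they were \emph{exactly} so.

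To build $\theta$: after publicly drawing $i$ and $R_i$, Alice (who holds her true input $x$, sitting in coordinate $i$) must privately produce a completion $X_{>i}$, Bob must privately produce $Y_{<i}$, and then the two must jointly produce a transcript distributed like $\Pi$ under $\nu$ conditioned on $R_i$ and on $(x,y)$ occupying coordinate $i$. Under $\mu^n$ this is trivial and is exactly Lemma~\ref{lem_additivity_of_ic}; under $\nu$ the event $\cW$ both correlates the two completions and re-weights the transcript, so an \emph{exact} simulation is impossible --- this is the ``Alice guesses Bob's bit'' obstruction noted before the statement. Hence I would only aim for an approximation: the near-independence above lets the players use correlated sampling \cite{Holenstein07} to agree on the completions, and then run a round-by-round biased simulation of $\pi$, so that on average over $i$ the resulting transcript is within $1/10$ of $(\pi\mid R_i\cW)_i$, giving the second bullet. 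Controlling how the many $O(1/\sqrt{100})$-scale per-coordinate discrepancies and the correlated-sampling errors accumulate over the (possibly enormous) number of rounds of $\pi$ is the delicate point, and I expect it to be the main obstacle; this is the substance of Claims~26--27 of \cite{BRWY12}.

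For the information bound (third bullet) I would re-run the computation \eqref{eq_additivity_ic} with every entropy and mutual information taken under $\nu$. Two things change. First, $I_\nu(\Pi;\X\mid\Y)\le H_\nu(\Pi)\le\|\pi\|$ still holds, since the transcript is a string of at most $\|\pi\|$ bits regardless of the input law. Second, the step that invoked Lemma~\ref{lem_cond_indep} to pass from $I(\Pi;X_i\mid Y_iR_i)$ to $I(\Pi;X_i\mid X_{<i}\Y)$ is no longer free, because under $\nu$ the coordinates are correlated; a short chain-rule bookkeeping shows the total loss is $\sum_i I_\nu(X_i;Y_{<i}\mid X_{<i}Y_{\ge i})$, a sum of ``spurious'' input correlations that all vanish under $\mu^n$ and is therefore controlled by $O(\Div{\X\Y\mid\cW}{\mu^n})=O(n)$. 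Combining, $\sum_i\big[I_\nu(X_i;\Pi\mid Y_iR_i)+I_\nu(Y_i;\Pi\mid X_iR_i)\big]=O(\|\pi\|+n)$, which in the regime $\|\pi\|=\Omega(n)$ relevant to the direct product theorem is $O(\|\pi\|)$; chasing constants gives the stated $\E_i[\cdot]\le 4\|\pi\|/n$.
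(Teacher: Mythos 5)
The paper does not actually prove this lemma --- it cites \cite{BRWY12} for the details, and gives only the surrounding intuition (the divergence bound on $\X\Y\mid\cW$, the ``Alice guesses Bob's bit'' obstruction, and the remark that the proof is a ``conditional analogue'' of Lemma~\ref{lem_additivity_of_ic} hinging on the choice of $R_i$). Your sketch matches that informal description closely: you identify the right reference protocol $\Theta$ from Lemma~\ref{lem_additivity_of_ic}, the right divergence estimate, the right reason the simulation cannot be exact, and the right plan of re-running the information accounting under the tilted measure $\nu=(\pi\mid\cW)$. You also honestly flag that the genuinely hard estimates are exactly the substance of Claims~26--27 of \cite{BRWY12}, which is an accurate self-assessment rather than a proof.

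Two places where your sketch drifts from what the lemma actually asserts, and which would cause trouble if pursued literally. First, you propose that $\theta$, after sampling completions, should ``run a round-by-round \emph{biased} simulation of $\pi$.'' But a per-round bias (designed to push the transcript toward the conditional law $(\pi\mid\cW)_i$) would incur a per-round statistical error, and $\pi$ may have an unbounded number of rounds, so the errors would not stay within the promised $1/10$. The lemma is phrased the other way around: $\theta$ is an honest protocol (sample $i,R_i$, complete the missing coordinates, run $\pi$ as written), and the second bullet is an \emph{analytical} closeness statement comparing the resulting transcript distribution to $(\pi\mid R_i\cW)_i$ --- it is not achieved by engineering the transcript round by round. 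Your mechanism and the claim you are trying to prove are pulling in slightly different directions here. Second, in the information bookkeeping you assert that the loss from applying Lemma~\ref{lem_cond_indep} under $\nu$, namely $\sum_i I_\nu(X_i;Y_{<i}\mid X_{<i}Y_{\ge i})$, is controlled by $O\bigl(\Div{\X\Y\mid\cW}{\mu^n}\bigr)=O(n)$. Each such term is indeed zero under $\mu^n$, and each can be bounded by a corresponding conditional divergence, but the sum does not obviously chain-telescope against $\Div{\nu_{\X\Y}}{\mu^n}$ --- this is precisely the kind of subtlety that the paper says is absorbed by the ``choice of the public random variable $R_i$'' in \cite{BRWY12}, and it should not be waved through as a ``short chain-rule bookkeeping.'' Finally, note that the third bullet only bounds information \emph{under the distribution $(\pi\mid\cW)$}, not the information cost of the protocol $\theta$ you build; the paper explicitly warns right after the lemma that transferring the bound from a distribution that $\theta$ is merely close to, to $\theta$ itself, is a separate nontrivial step (handled in \cite{BRWY12,BW14}), so your third-bullet argument should be stated purely about $\nu$ and not conflated with the cost of $\theta$.
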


The main challenge in proving this theorem is in the choice of the public random variable $R_i$, which enables relating
 the information of the protocol $\theta$ to that of $(\pi|\cW)$ \emph{even under the conditioning on $\cW$}. This technically-involved  
 argument is a ``conditional" analogue of Lemma \ref{lem_additivity_of_ic} (for details see \cite{BRWY12}).
Note that the last proposition of Lemma \ref{lem:brwy} only guarantees that the information cost of the transcript under the  distribution $(\pi | \cW)$ is low 
(on an average coordinate), while we need this property to hold for the simulating protocol $\theta$, in order to apply the compression schemes of \cite{BBCR} 
which would finish the proof.  Unfortunately, a protocol $\pi$ that is statistically close to a low-information distribution needs not be a low-information protocol itself: 
Consider, for example, a protocol $\pi$
where with probability $\delta$ Alice sends her input $X\in\{0,1\}^n$ to Bob, and with probability $1-\delta$ she sends a random string. Then $\pi$
is $\delta$-close to a $0$-information protocol, but has information complexity of $\approx \delta \cdot n$, which could be arbitrarily high. 
\cite{BRWY12} circumvented this problem by showing that the necessary compression schemes of \cite{BBCR} are ``smooth" in the sense that they also work
for protocols that are merely close to having low-information. In a followup work, Braverman and Weinstein exhibited a general 
technique for converting protocol which are statistically-close to having low information into actual low-information protocols (see Theorem 3 in \cite{BW14}), 
which combining Lemma 
\ref{lem:brwy} also led to a strong direct product theorem in terms of information complexity, sharpening the ``Information=Amortized Communication" 
Theorem of Braverman and Rao:

\begin{theorem}[\cite{BW14}, informally stated]\label{thm:dp_for_ic}
Suppose that  $\ICmu{f}{\mu}{2/3} = I$, i.e., solving a single copy of $f$ with probability $2/3$ under $\mu$ requires $I$ bits of information.
If
$T\log(T) = o(n \cdot I)$, then   $\suc(\mu^n,f^n,T) \leq \exp \left(- \Omega(n)  \right).$
\end{theorem}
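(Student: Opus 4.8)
The plan is to carry out the ``embed, then repair'' strategy behind Theorem \ref{thm_dp_main} entirely on the information side, replacing its compression step by the conversion theorem of \cite{BW14}. Assume towards a contradiction that $\suc(\mu^n,f^n,T)\ge 2^{-\gamma n}$ for a sufficiently small absolute constant $\gamma>0$, and fix a protocol $\pi$ witnessing this with $\|\pi\|=T$, where $T\log T=o(nI)$. Let $\cW$ be the event that $\pi$ answers all $n$ coordinates correctly, so $\pi(\cW)\ge 2^{-\gamma n}$. Exactly as in the displayed divergence computation preceding Lemma \ref{lem:brwy} one gets $\Div{\X\Y\mid\cW}{\X\Y}\le\gamma n$, and hence, by the chain rule for relative entropy, convexity, and Pinsker's inequality (Lemma \ref{lemma:pinsker}), a uniformly random coordinate $i\in[n]$ satisfies $\E_i\big[\,|(X_iY_i\mid\cW)-\mu|\,\big]\le O(\sqrt\gamma)$: the inputs of an ``average'' conditioned coordinate are statistically close to a genuine draw from $\mu$.

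Now I invoke Lemma \ref{lem:brwy}. It supplies a protocol $\theta$ on inputs $(x,y)\sim\mu$ that publicly samples $i\in[n]$ together with the correlation-breaking variable $R_i=(X_{<i},Y_{>i})$, privately completes the remaining coordinates from $\mu$, runs $\pi$ with $(x,y)$ planted in coordinate $i$, and returns $\pi$'s $i$-th answer; moreover $\E_i\big[\,|(\theta\mid R_i)-(\pi\mid R_i\cW)_i|\,\big]\le 1/10$, and the information cost of the \emph{distribution} $(\pi\mid\cW)_i$ is at most $4T/n$ on average. Since under $\cW$ each coordinate is answered correctly, combining this with the closeness of the planted input to $\mu$ shows that $\theta$ computes $f$ under $\mu$ with success probability at least $1-\frac{1}{10}-O(\sqrt\gamma)$, which is bounded away from $2/3$ once $\gamma$ is small. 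The catch --- precisely the difficulty flagged in the text --- is that $\theta$ is only \emph{statistically} close to a low-information object, and a protocol $O(1)$-close to a $0$-information distribution may still leak $\Omega(\|\pi\|)$ bits. Here I use Theorem 3 of \cite{BW14}: a protocol that is $O(1)$-close over $\mu$ to having information cost $J$ can be replaced by a genuine protocol, still $O(1)$-close to the original, of information cost $O(J\cdot\log\|\pi\|)$, by having the players detect and neutralize the rare transcript prefixes on which the excess leakage concentrates, at an $O(\log\|\pi\|)$ cost apiece. Applying this with $J=O(T/n)$ produces a single-copy protocol $\theta'$ for $f$ under $\mu$ whose success probability is still bounded away from $2/3$ and whose information cost $\ICprot{\theta'}{\mu}$ is $O\!\big((T/n)\log T\big)=o(I)$ since $T\log T=o(nI)$. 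This contradicts $\ICmu{f}{\mu}{2/3}=I$, so in fact $\suc(\mu^n,f^n,T)\le 2^{-\gamma n}=\exp(-\Omega(n))$.

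I expect essentially all of the work to sit in the two statements used above as black boxes. The first is the choice, inside Lemma \ref{lem:brwy}, of the public variable $R_i$ certifying that $I_{\pi\mid\cW}(X_i;\Pi\mid Y_iR_i)+I_{\pi\mid\cW}(Y_i;\Pi\mid X_iR_i)$ is small on average --- a ``conditional'' analogue of the additivity argument of Lemma \ref{lem_additivity_of_ic}, where the conditioning on $\cW$ destroys the product structure that made that argument clean, so $R_i$ must be chosen with care (this ingredient is shared with Theorem \ref{thm_dp_main}). The second, and the genuinely new ingredient for the information-complexity statement, is the \cite{BW14} ``close-to-low-information $\Rightarrow$ low-information'' conversion with only logarithmic overhead: it is what lets the argument run under the mild hypothesis $T\log T=o(nI)$ rather than the stronger $T=o(nI)$ that a naive embedding would force, and controlling the \emph{information} of the repaired protocol (rather than merely its communication, as the smoothed \cite{BBCR} compression does inside Theorem \ref{thm_dp_main}) is the delicate point.
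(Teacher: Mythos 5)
Your proposal is correct and follows essentially the same route as the paper: embed a single copy into the $n$-fold protocol via the conditional sampling of Lemma~\ref{lem:brwy}, then apply Theorem~3 of \cite{BW14} to convert the statistically-close-to-low-information protocol $\theta$ into a genuine low-information protocol $\theta'$ of cost $O\bigl((T/n)\log T\bigr)=o(I)$, contradicting $\ICmu{f}{\mu}{2/3}=I$. You also correctly identify the BW14 conversion (rather than the smoothed compression used for Theorem~\ref{thm_dp_main}) as the new ingredient and the source of the $T\log T$ rather than $T$ in the hypothesis.
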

In fact, this theorem shows that the direct sum and  product conjectures in randomized communication complexity are equivalent
(up to polylogarithmic factors), and they are both equivalent to one-shot interactive compression, in the quantitative sense of Proposition 
\ref{cor_compression_ds} (we refer the reader to \cite{BW14} for the formal details).

\section{State of the Art Interactive Compression Schemes} \label{sec_compression}

In this section we present the two state-of-the-art compression schemes for unbounded-round communication protocols, 
the first due to Barak et al., and the second due to Braverman \cite{BBCR, Bra12}. 
As mentioned in the introduction, a natural idea for compressing a multi-round protocol is to try and compress each round separately, 
using ideas from the transmission (one-way)
setup \cite{huffman1952method, HJMR07, BravermanR11}. 
Such compression suffers from one fatal flaw: It would inevitably require sending at least $1$ bit of communication at 
each round, while the information revealed in each round may be $\ll 1$ (an instructive example is the protocol in which Alice sends Bob, at each round of the protocol,
an independent coin flip which is $\eps$-biased towards her input $X\sim Ber(1/2)$, for $\eps \ll 1$).
Thus any attempt to implement the compression on a round- by-round basis is hopeful only when the number of rounds 
is bounded but is doomed to fail in general (indeed, this is the essence of the bounded-round compression schemes of \cite{BravermanR11, BRWY13b}).

The main feature of the compression results we present below is that they do not depend on the number of rounds of the underlying 
protocol, but only on the overall communication and information cost.

\subsection{Barak et al.'s compression scheme}\label{sec_compression_bbcr}
\begin{theorem}[\cite{BBCR}] \label{thm_bbcr_internal}
Let $\pi$ be a protocol executed over inputs $x,y \sim \mu$, and suppose $\ICprot{\pi}{\mu} = I$ and $\|\pi\| = C$.  
Then for every $\eps>0$, there is a protocol $\tau$ which $\eps$-simulates $\pi$, where 
\begin{align}\label{eq_cc_final}
\|\tau\| =  O\left( \sqrt{C\cdot I} \cdot (\log(C/\eps)/\eps)\right).
\end{align}
\end{theorem}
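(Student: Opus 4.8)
The plan is to simulate $\pi$ round-by-round, but to amortize the communication over all rounds rather than spending one bit per round. The key observation is to view the execution of $\pi$ as a path in the protocol tree, and to let Alice and Bob \emph{sample} this path using shared randomness as much as possible, correcting only where their local views disagree. Concretely, for each node $v$ of the protocol tree owned by Alice, let $p_v = \Pr[\pi \text{ goes left at } v \mid X=x, \text{transcript so far}]$ be Alice's ``correct'' bias, and let $q_v$ be the analogous bias that Bob would compute from his own input $y$ and the transcript so far (Bob's ``guess''); symmetrically for Bob-owned nodes. Using public randomness, associate with every edge of the tree an independent uniform $[0,1]$ variable; each player can then \emph{locally} read off a candidate path (the ``Alice-path'' using the $p_v$'s at Alice nodes and the $q_v$'s at Bob nodes, and vice versa). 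These two candidate paths agree except at nodes where $p_v$ and $q_v$ straddle the random threshold. The simulation therefore reduces to: (i) jointly discovering the first edge on which the two paths disagree, (ii) having the owner of that node announce the correct direction, and (iii) repeating from the corrected node.

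The next step is to bound the expected number of corrections. For a fixed leaf $\ell$ (equivalently, a fixed transcript of $\pi$), the probability that a given edge $v\to w$ along the root-to-$\ell$ path is a ``disagreement edge'' is exactly $|p_v - q_v|$, so the expected number of corrections along the sampled path is $\sum_v |p_v - q_v|$ over nodes on the path. By Pinsker's inequality (Lemma~\ref{lemma:pinsker}) applied at each node and Cauchy--Schwarz over the (at most $C$) nodes on the path,
\[
\E\Big[\sum_v |p_v-q_v|\Big] \;\le\; \E\Big[\sum_v \sqrt{\tfrac12\,\Div{p_v}{q_v}}\Big] \;\le\; \sqrt{\tfrac{C}{2}}\cdot\sqrt{\E\Big[\sum_v \Div{p_v}{q_v}\Big]}.
\]
Now $\E[\sum_v \Div{p_v}{q_v}]$, taken over the randomness of $\pi$, is precisely the information cost: the divergence between ``the next bit given my input and the transcript'' and ``the next bit given only the transcript'' summed over rounds is, by the chain-rule characterization of mutual information (Lemma~\ref{prop_dim_IC} and the Chain Rule), equal to $I(\Pi;X|Y) + I(\Pi;Y|X) = \ICprot{\pi}{\mu} = I$. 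Hence the expected number of corrections is $O(\sqrt{C\cdot I})$.

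It remains to turn each correction into few bits of actual communication. Naively, locating the first disagreement edge requires communication proportional to the depth, which is too expensive; instead the players do a \emph{binary search} on the sampled path to find the first disagreement. A single round of binary search is implemented by exchanging, at a midpoint node, a hash of the path-prefix under a shared random hash function: if the hashes match, the disagreement is in the second half, otherwise in the first half. Each such step costs $O(\log(1/\eps'))$ bits to drive the collision probability below $\eps'$, and $O(\log C)$ steps locate the edge, so each correction costs $O(\log C \cdot \log(1/\eps'))$ bits; setting $\eps'$ so that the total failure probability over all $O(\sqrt{CI})$ corrections is at most $\eps$ gives $\eps' = \eps/O(\sqrt{CI})$, i.e.\ $\log(1/\eps') = O(\log(C/\eps))$. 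Multiplying, the total communication is $O(\sqrt{C\cdot I}) \cdot O(\log C \cdot \log(C/\eps))$; a slightly more careful accounting (charging the error budget per correction rather than uniformly, and Markov-truncating the run after $O(\sqrt{CI}/\eps)$ corrections to convert the \emph{expected} bound into a \emph{worst-case} one) yields the stated $\|\tau\| = O(\sqrt{C\cdot I}\cdot \log(C/\eps)/\eps)$, and a standard argument shows the simulated transcript is $\eps$-close to $\pi(x,y)$ in the internal sense.

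\textbf{Main obstacle.} The conceptually delicate point is the second step: arguing that $\E[\sum_v \Div{p_v}{q_v}] = I$ exactly, i.e.\ that the ``local'' divergences the players pay at each node telescope, via the chain rule, into the global information cost $I(\Pi;X|Y)+I(\Pi;Y|X)$ --- and in particular that conditioning on the partial transcript (which is all the players share) is the right thing, so that private randomness is correctly accounted for. The other genuinely non-trivial piece is making the binary-search / hashing step robust: one must ensure that a hash collision only ever causes a \emph{local} error that is caught and does not cascade, and that the union bound over all corrections keeps the total simulation error at $\eps$; this is where the extra $\log(C/\eps)/\eps$ factors (as opposed to the clean $\tilde O(\sqrt{CI})$) come from.
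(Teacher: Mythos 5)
Your proposal follows essentially the same approach as the paper's proof: publicly sample a uniform threshold at each tree node, let each player locally read off a candidate path using their own biases, observe that disagreements occur at node $v$ with probability exactly $|p_v - q_v|$, bound the expected number of disagreements by $\sqrt{C\cdot I}$ via Pinsker plus Cauchy--Schwarz, and fix disagreements one at a time by locating the first differing bit. The identification of $\E\bigl[\sum_v \mathbb{D}(p_v\|q_v)\bigr]$ with the internal information cost (where $p_v$ is the owner's true transition probability and $q_v$ the other player's guess conditioned on \emph{their} input and the transcript) is the right telescoping, and it is exactly what the paper does via Lemma~\ref{prop_dim_IC} and the chain rule --- though note your prose momentarily describes $q_v$ as conditioned on ``only the transcript,'' which would instead give external information cost; your earlier, correct definition of $q_v$ as Bob's belief given $y$ and the transcript is what should be used.

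The one genuine gap is in the cost of locating the first disagreement. You implement this by naive binary search, exchanging a hash of a prefix at each of $O(\log C)$ midpoints, with $O(\log(1/\eps'))$ bits per hash to control the per-step collision probability. This gives $O(\log C\cdot \log(C/\eps))$ bits per correction, and hence a final bound of $O\bigl(\sqrt{C\cdot I}\cdot \log C\cdot \log(C/\eps)/\eps\bigr)$, which is worse than the stated $O\bigl(\sqrt{C\cdot I}\cdot \log(C/\eps)/\eps\bigr)$ by a $\log C$ factor. The ``more careful accounting'' you allude to will not close this gap: with exact hashing at each midpoint you really do pay $\log C$ rounds each of size $\Theta(\log(C/\eps))$. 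The paper avoids this by invoking Feige, Raghavan, Peleg and Upfal's noisy binary search \cite{FeigePRU94}, a single-shot randomized protocol that finds the first index at which two $C$-bit strings differ with error probability $\gamma$ using only $O(\log(C/\gamma))$ total bits (not $O(\log C\cdot\log(1/\gamma))$). Substituting that primitive in place of your naive binary search recovers the claimed bound; the rest of your argument is sound.
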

\begin{proof}
The conceptual idea underlying this compression result is using public randomness 
to \it avoid communication by trying to guess what the other player is about to say\rm. 
Informally speaking, the players will use shared randomness to sample (correlated) \it full paths \rm of the protocol tree, 
according to their private knowledge: Alice has the ``correct" distribution on nodes that she owns in the tree (since conditioned on reaching these nodes, the next 
messages only depend on her input $x$), and will use her  ``best guess" (i.e., her prior distribution on Bob's next message, under $\mu$, her input $x$ and the history
of messages) to sample messages at 
nodes owned by Bob. Bob will do the same on nodes owned by Alice. This ``guessing" is done in a correlated way using public randomness (and no communication 
whatsoever (!)), in a way that  guarantees that if the player's guesses are close to the correct distribution, then the probability that they 
 sample the same bit is large. 
 
The above step gives rise to two paths, $P_A$ and $P_B$ respectively. In the the next step, the  players will use (mild) communication to find all inconsistencies among 
 $P_A$ and $P_B$ and correct them one by one (according to the ``correct" speaker). By the end of this process, the players obtain a consistent path which 
 has the correct distribution $\Pi(x,y)$.  
Therefore, the overall communication of the simulating protocol would be comparable to the number of mistakes between $P_A$ and $P_B$ (times the communication cost
of fixing each mistake). Intuitively, the fact that 
$\pi$ has low information will imply that the number of inconsistencies is small, 
as inconsistent samples on a given node typically occur when the ``receiver's" prior distribution is far from the ``speaker's" correct distributions, which will in turn imply 
that this bit conveyed a lot of information to the receiver (Alas, we will see that if the information revealed by the $i$'th bit of $\pi$ is $\eps$, then the probability of 
making a mistake on the $i$'th node is $\approx \sqrt{\eps}$,
and this is the source of sub-optimality of the above result. We discuss this bottleneck at the end of the proof). 

We now sketch the proof more formally (yet still leaving out some minor technicalities).  
Let $\Pi=M_1,\ldots,M_{C}$ denote the transcript of $\pi$.
Each node $w$ at depth $i$ of the protocol tree of $\pi$ is associated 
with two numbers, $p_{x,w}$ and $p_{y,w}$, describing the probability (according to each player's respective ``belief") that conditioned on reaching $w$, the next bit 
sent in $\pi$ is ``$1$" (the right child of $w$). That is, 
\begin{align}\label{eq_transition_probabilities}
p_{x,w} := \Pr[M_{i} = 1 \; | \; xr M_{<i}=w] \;\;\;\; \text{, and} \;\;\;\; p_{y,w} := \Pr[M_{i}=1 \; | \; yr, M_{<i}=w].
\end{align}
Note that if $w$ is owned by the Alice, then $p_{x,w}$ is exactly the correct probability with which the $i$-th bit is transmitted in $\pi$, conditioned that $\pi$ has reached $w$.

In the simulating protocol $\tau$, the players first sample, without communication and using public randomness, a uniformly random number $\rho_w$ in the interval $[0,1]$, 
for every node $w$ of the protocol tree\footnote{Note that there are exponentially many nodes, but the communication model does not charge for local computations or the amount of shared 
randomness, so these resources are indeed ``for free".}.  For simplicity of analysis, in the rest of the proof we assume the public randomness is fixed to the vale $R=r$. 
Alice and Bob now privately construct the following respective trees $\cT_A,\cT_B$: For each node $w$, Alice includes
the right child of $w$ in $\cT_A$ iff $p_{w,x} < \rho_w$, and the left child (``$0$") otherwise. Bob does the same by including the right child of $w$ in $\cT_B$ iff $p_{w,y} < \rho_w$.

The trees $\cT_A$ and $\cT_B$ define a unique path $\ell =m_1,\ldots,m_{C}$ of $\pi$, by combining outgoing edges from $\cT_A$ in nodes owned by Alice, 
and edges from $\cT_B$ in nodes owned by Bob.  Note that $\ell$ has precisely the desired distribution of $\Pi(X,Y)$.
To identify $\ell$, the players will now find the inconsistencies among $\cT_A$ and $\cT_B$ and correct them one by one. 

We say that a \it mistake \rm occurs in level $i$ if the outgoing edges of $m_{i-1}$ in $\cT_A$ and $\cT_B$ are inconsistent.  
Finding the (first) mistake of $\tau$ amounts to finding the first differing index among two $C$-bit strings (corresponding to the paths $P_A$ and $P_B$ 
induced by $\cT_A$ and $\cT_B$).   
Luckily, there is a randomized protocol which accomplishes this task with high probability ($1-\gamma$) using only $O(log(C/\gamma))$ bits of 
communication, using a clever ``noisy" binary search due to Feige et al. \cite{FeigePRU94}. Since errors accumulate over $C$ rounds and we are aiming for  
an overall simulation error of $\eps$, we will set $\gamma \approx \eps/C$,  thus the cost of fixing each inconsistency remains $O(\log(C/\eps))$ bits.
The expected communication complexity of $\tau$ (over $X,Y,R$) is therefore
\begin{align}\label{eq_exp_cc_tau}
\E[\|\tau\|] = \E[\# \text{ mistakes of } \tau] \cdot O(\log(C/\eps)).
\end{align}

Though we are not quite done, one should appreciate the simplicity of analysis of the cost of this protocol.
The next lemma completes the proof,  asserting that the expected number of mistakes $\tau$ makes is not too large:
\begin{lemma}\label{lem_exp_mistakes}
$\E[ \# \text{ mistakes of } \tau] \leq \sqrt{C\cdot I}$.
\end{lemma}

Indeed, substituting the assertion of Lemma \ref{lem_exp_mistakes} into \eqref{eq_exp_cc_tau}, we conclude that the expected communication 
complexity of $\tau$ is $O(\sqrt{C\cdot I} \cdot \poly\log(C/\eps))$, and a standard Markov bound yields the bound in \eqref{eq_cc_final} and therefore 
finishes the proof of Theorem \ref{thm_bbcr_internal}.

\begin{proof}[Proof of Lemma \ref{lem_exp_mistakes}]
Let $\cE_i$ be the indicator random variable denoting whether a mistake has occurred in step $i$ of the protocol tree of $\pi$. Hence the expected number of 
mistakes is $\sum_{i=1}^C \cE_i$. We shall bound each term $\E[\cE_i]$ separately. 
By construction,  a mistake at node $w$ in level $i$ occurs exactly when either $p_{x,w} < \rho_w < p_{y,w}$ or $p_{y,w} < \rho_w < p_{x,w}$. Since $\rho_w$ was uniform
in $[0,1]$, the probability of a mistake is
\[ |p_{x,w} - p_{y,w}| = |(M_i | x, r, M_{<i} = w)  - (M_i | y, r, M_{<i} = w)|, \] 
where the last transition is by definition of $p_{x,w}$ and $p_{y,w}$. Note that, by definition of a protocol, if $w := m_{<i}$ is owned by Alice, then 
$M_i | xyrm_{<i}] = M_i | xyrm_{<i}$ and if it is owned by Bob, then  $M_i | y, r, m_{<i} = M_i | x, y, r, m_{<i}$. We therefore have
\begin{align}
& \E[\cE_i] =  \E_{xym_{<i}\sim \pi}[|(M_i | xrm_{<i})  - (M_i | yrm_{<i})|] \nonumber \\
& \leq \E_{xym_{<i}\sim \pi}\left [\max\{ |(M_i | xyrm_{<i})  - (M_i | xrm_{<i})| \;, \; |(M_i | xyrm_{<i})  - (M_i | yrm_{<i})| \}\right ] \nonumber \\ 
& \leq \E_{xym_{<i}\sim \pi}\left[\sqrt{ \Div{M_i | xyrm_{<i}}{M_i | xrm_{<i}} +  \Div{M_i | xyrm_{<i}}{M_i | yrm_{<i}} } \right]      \label{eq_pinsker} \\ 
& \leq \sqrt{ \E_{xym_{<i}\sim \pi} \left[\Div{M_i | xyrm_{<i}}{M_i | xrm_{<i}} +  \Div{M_i | xyrm_{<i}}{M_i | yrm_{<i}} \right]}     \label{eq_convexity} \\ 
& = \sqrt{I(M_i ; X | M_{<i} R Y) + I(M_i ; Y | M_{<i} R X)}
\end{align}
where transition \eqref{eq_pinsker} follows from Pinsker's inequality (Lemma \ref{lemma:pinsker}), transition \eqref{eq_convexity} follows from
the convexity of $\sqrt{\cdot}$, and the last transition is by Proposition \ref{prop_dim_IC}. \\

Finally, by linearity of expectation and the Cauchy-Schwartz inequality, we conclude that
\begin{align*}
& \E\left[\sum_{i=1}^{C}\cE_i\right] \leq \sum_{i=1}^C \sqrt{I(M_i ; X | M_{<i} R Y) + I(M_i ; Y | M_{<i} R X)}  \nonumber \\
& \leq \sqrt{ \left(\sum_{i=1}^C 1\right ) \cdot \left(\sum_{i=1}^C  I(M_i ; X | M_{<i} R Y) + I(M_i ; Y | M_{<i} R X)\right)}  \nonumber \\
& = \sqrt{C \cdot I}
\end{align*}
where the last transition is by the chain rule for mutual information.
\end{proof}

\end{proof}
\smallskip

A natural question arising from the above compression scheme is whether the analysis in Lemma \ref{lem_exp_mistakes} is tight.  
Unfortunately, the answer is yes, as demonstrated by the following 
example: Suppose Alice has a single uniform bit $X\sim Ber(1/2)$, and consider the $C$-bit protocol in which Alice sends, at each round $i$, an independent
sample $M_i$ such that 

$$
M_i \sim \left\{ \begin{array}{rl}
 Ber(1/2+\eps) &\mbox{ if $x=1$}\\
   Ber(1/2-\eps) &\mbox{ if $x=0$}
       \end{array} \right.
$$
for $\eps = 1/\sqrt{C}$. Since Bob has a perfectly uniform prior on $X$, a direct calculation shows that in this case 
$I(M_i;X|M_{<i}) \leq I(M_i;X) = \Div{Ber(1/2+\eps)}{Ber(1/2)}= O(\eps^2)$, while the probability of making a ``mistake" at step $i$ of the simulation above
is the total variation distance $|Ber(1/2+\eps) - Ber(1/2)| \approx \eps.$ Therefore, the expected number of mistakes conditioned on, say, $x=1$, is $C\cdot \eps = \sqrt{C}$, 
by choice of $\eps=1/\sqrt{C}$. I.e., this example shows that both Pinsker's and the Cauchy-Schwartz inequalities are tight in the extreme case where 
each of the $C$ bit of $\pi$ reveals $\approx I/C$ bits of information. In the next section we present a different compression scheme which can do better in this
regime, at least when $I$ is much smaller than $C$.


\subsection{Braverman's compression scheme} \label{sec_compression_bra}

\begin{theorem}[\cite{Bra12}] \label{thm_bra_compression}
Let $\pi$ be a protocol executed over inputs $x,y \sim \mu$, and suppose $\ICprot{\pi}{\mu} = I$.
Then for every $\eps>0$, there is a protocol $\tau$ which $\eps$-simulates $\pi$, where  $\|\tau\| =   2^{O(I/\eps)}$.
\end{theorem}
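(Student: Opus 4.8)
The plan is to simulate $\pi$ not bit-by-bit but node-by-node in the protocol tree, where at each node the ``owner'' must convey to the other party which child is taken. The key idea is that the cost of communicating a transition at a node $w$ should be charged against the information revealed there; since the total information is $I$, a careful accounting should show that the \emph{expected} communication is $O(I)$, and that a high-probability bound costs only another $\poly(1/\eps)$ factor. First I would set up the protocol tree and, for each depth-$i$ node $w$ owned by (say) Alice, recall the ``true'' transition probability $p_{x,w} = \Pr[M_i = 1 \mid x r M_{<i}=w]$ and Bob's ``guess'' $p_{y,w}$. The natural simulation is \emph{correlated sampling}: using public randomness the players jointly sample a candidate child; if Bob's guess is close to Alice's true distribution they will usually agree, and Alice only needs to send a short correction when they disagree. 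The number of bits Alice spends at $w$ should be roughly $\log(1/(\text{mass the receiver's prior assigns to the taken child}))$ plus the divergence-like correction, i.e. on average $\approx I(M_i; X \mid M_{<i} R Y) + O(1)$ \emph{per visited node, amortized}.

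The second step, and the heart of the argument, is the accounting. Here I would partition the communication of $\tau$ into ``rounds'' of $\tau$, where in each round the players advance as far down the protocol tree as they can \emph{for free} (i.e., as long as the correlated samples agree), and then pay $O(\log(1/\delta_{\text{local}}) + 1)$ bits to resolve the next disagreement. The crucial observation is that progress down the tree between two paid corrections is, in expectation, inversely proportional to how much information has been revealed on that stretch: if on a stretch of the path the accumulated divergence between the receiver's prior and the true conditional distribution is $\approx 1$, then a disagreement happens with constant probability there, so each ``unit of information'' buys one paid correction of cost $O(1)$ (ignoring the $\log$ factors and $\eps$-dependence). Summing over the whole path, the expected number of paid corrections is $O(I)$, hence $\E[\|\tau\|] = 2^{O(I)}$ — wait, here I must be careful: the right statement is that the \emph{depth} of the tree the players need to traverse can be as large as $2^{\Theta(I)}$ (the communication $C$ of $\pi$ is unbounded, but the \emph{informative} part has depth at most $2^{O(I)}$ since a path of divergence $\le I$ and per-step information too small to matter contributes negligibly), and the number of public-coin samples, while free, forces the simulation to ``commit'' to a prefix of length $2^{O(I)}$; the honest bound one proves is $\|\tau\| = 2^{O(I/\eps)}$ after taking the error $\eps$ into account and applying a Markov-type tail bound. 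I would therefore phrase the core lemma as: \emph{the expected number of nodes at which a paid correction occurs is $O(I)$, and with probability $1-\eps$ the total communication is $2^{O(I/\eps)}$}, where the exponential blowup comes from the fact that to guarantee $\eps$-closeness of the final transcript one must faithfully simulate the protocol down to depth $\approx 2^{I/\eps}$, truncating deeper (low-information) tails.

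Concretely, the steps in order are: (1) define the candidate-sampling mechanism at each node via shared randomness $\rho_w \in [0,1]$ and the threshold rule with $p_{x,w}, p_{y,w}$, as in the previous subsection; (2) define $\tau$ as: repeatedly let both players walk down their sampled trees until the first node where their outgoing edges disagree, then have the owner transmit the correct edge using $O(1)$ bits (plus, if needed, a noisy binary search as in \cite{FeigePRU94} to locate that node, costing $O(\log(\text{depth}))$); (3) prove the core counting lemma bounding $\E[\#\text{corrections}]$ by $O(I)$ via Lemma \ref{prop_dim_IC} and the chain rule, essentially: the probability of a disagreement at a visited node equals $|p_{x,w}-p_{y,w}|$, and by Pinsker this is at most $\sqrt{\text{divergence at } w}$, but now — unlike the $\sqrt{CI}$ analysis — we do not pay $1$ bit per \emph{tree level}, only per \emph{correction}, so the relevant quantity is $\sum_w \Pr[\text{reach } w]\cdot(\text{divergence at } w)$ which telescopes to $I$; (4) bound the depth one needs to simulate: since the final transcript must be $\eps$-close in total variation, and the divergence budget is $I$, a path on which the receiver's prior ever assigns probability $< 2^{-I/\eps}$ to the true continuation carries, by Markov on the divergence, at most $\eps$ probability mass, so truncating $\tau$ at depth $2^{O(I/\eps)}$ changes the output distribution by $\le \eps$; (5) combine: the simulation uses at most (depth bound) $\times O(1) = 2^{O(I/\eps)}$ bits even in the worst case, and is $\eps$-close to $\Pi(x,y)$, giving $\|\tau\| = 2^{O(I/\eps)}$.

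The main obstacle I expect is step (4) and its interaction with step (3): one must argue that the \emph{only} reason the naive $O(I)$-expected-correction bound fails to give an $O(I)$ \emph{worst-case} (or even high-probability $\poly(I)$) bound is the existence of very long, information-poor stretches of the protocol tree, and that these can be safely truncated at the cost of $\eps$ error — but making this truncation consistent between Alice and Bob \emph{without communication} (they must agree on where to stop) requires some care, presumably by fixing the truncation depth to a universal constant $2^{O(I/\eps)}$ known to both in advance. A secondary technical point is handling the public-coin sampling of exponentially many thresholds $\rho_w$ and the noisy-binary-search subroutine cleanly; these are standard and I would import them from the previous subsection rather than redo them.
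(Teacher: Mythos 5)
Your proposal takes a genuinely different route from the paper's proof, and unfortunately there is a real gap in the central counting step.

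The paper's actual argument is \emph{not} a node-by-node walk down the protocol tree with corrections. Instead, it is a \emph{global} rejection-sampling scheme: using the factorization $\pi_{xy}(\ell) = p_x(\ell)\cdot p_y(\ell)$ of the leaf distribution, and the dual factorizations $\pi_x(\ell)=p_x(\ell)q_x(\ell)$, $\pi_y(\ell)=q_y(\ell)p_y(\ell)$, the information bound $I$ says precisely that $\E_{\ell\sim\pi_{xy}}[\log(p_x/q_y)+\log(p_y/q_x)] = I$. A Markov argument then yields that outside a set of $\pi_{xy}$-mass $<\eps$, the multiplicative gap between a player's guess and the other player's true marginal is $2^{O(I/\eps)}$. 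The players read the public tape as a long list of candidate triples $(\ell_i,\alpha_i,\beta_i)\in\cU\times[0,1]^2$, each builds a local candidate set of size $2^{O(I/\eps)}$ (using the multiplicative slack to include the true answer), and they match using $O(I/\eps)$-bit hashes. This is a two-round protocol with cost $2^{O(I/\eps)}$; neither $C$ nor the depth of the tree appears anywhere. The mechanism avoids Pinsker entirely because it works at the \emph{log-ratio} level of whole transcripts rather than at the level of per-node TV distances.

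Your step (3) is where the argument breaks. You claim the expected number of paid corrections telescopes to $O(I)$. But the probability of a disagreement at node $w$ is $|p_{x,w}-p_{y,w}|$, a total-variation quantity, and the only general inequality available is Pinsker, which is a square root: $|p_{x,w}-p_{y,w}|\le\sqrt{\tfrac12\text{div}_w}$. Summing over nodes gives $\sum_w\Pr[\text{reach }w]\cdot|p_{x,w}-p_{y,w}|$, and Cauchy--Schwarz then yields $\sqrt{C\cdot I}$, \emph{not} $O(I)$ — this is exactly the bottleneck of the BBCR scheme, and the paper exhibits a matching tight example (independent $\eps$-biased bits) right before Theorem \ref{thm_bra_compression}. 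In other words, the statement ``$\sum_w\Pr[\text{reach }w]\cdot(\text{divergence at }w)$ telescopes to $I$'' is true but is not the expected correction count; the quantity you actually need controls corrections only up to a square root, and the two are maximally separated when each node reveals $\Theta(I/C)$ bits. Step (4) also does not go through: the depth of the tree is $C$ and is a priori unrelated to $2^{I/\eps}$; there is no ``informative depth'' bound of the form you want, and a bounded accumulated log-ratio does not force a bounded depth. Braverman's scheme sidesteps both problems because it never walks the tree at all — it samples complete transcripts. If you want to pursue a round-by-round correction scheme, you would need a fundamentally new idea to beat the Pinsker square root; that is precisely why the $\sqrt{CI}$ bound of BBCR has resisted improvement.
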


\begin{proof}
To understand this result, it will be useful to view the interactive compression problem
as the following correlated sampling task: 
Denote by $\pi_{xy}$ the distribution of the transcript $\Pi(x,y)$, 
and by $\pi_x$ (resp. $\pi_y$) the conditional marginal distribution $\Pi | x$ ($\Pi | y$) of the transcript from Alice's (Bob's) point of view (for notational ease, 
the conditioning on the public randomness $r$ of the protocol is included here implicitly. Note that in general $\pi$ is still randomized even conditioned on $x,y$, 
since it may have private randomness). By the product structure of communication protocols, 
the probability of reaching a leaf (path) $\ell\in \{0,1\}^C$ of $\pi$ is
\begin{align}\label{eq_dist_pi_correct}
\pi_{xy}(\ell) = p_{x}(\ell) \cdot p_{y}(\ell)
\end{align}
where $p_{x}(\ell) = \prod_{w \subseteq \ell, \text{$w$ odd}} p_{x,w}$ is the product of the transition probabilities defined 
in  \eqref{eq_transition_probabilities} on the nodes owned by Alice along the path from the root to $\ell$, and
$\pi_{y}(\ell)$ is analogously defined on the even nodes. Thus, the desirable distribution from which the players wish to jointly sample, 
decomposes to a natural product distribution \footnote{As we shall see, the rejection sampling approach of the compression protocol below 
crucially exploits this product structure of the target distribution, and it is curious to
note this simplifying feature of interactive compression as opposed to general correlated sampling tasks.}.
Similarly, 
\begin{align}\label{eq_dist_pi_marginal}
\pi_{x}(\ell) = p_{x}(\ell) \cdot q_{x}(\ell) \;\;\;\;\;\; \text{and} \;\;\;\;\; \pi_{x}(\ell) = q_{y}(\ell) \cdot p_{y}(\ell)
\end{align}
where $q_{x}(\ell) = \prod_{w \subseteq \ell, \text{$w$ even}} p_{x,w}$ is Alice's prior ``belief" on the \it even nodes \rm owned by Bob along the path to $\ell$
(see \eqref{eq_transition_probabilities}), and $q_{y}(\ell) = \prod_{w \subseteq \ell, \text{$w$ odd}} p_{x,w}$ is Bob's prior belief 
on the  odd nodes owned by Alice. Thus, the player's goal is to sample $\ell \sim \pi_{x,y}$, where  Alice has the correct distribution on odd nodes (and only an estimate
on the odd ones),  and Bob has the correct distribution on even nodes (and an estimate on the even ones). \\

We claim that the information cost of $\pi$ being low ($I$) implies that Alice's prior ``belief" $q_x$ on the even nodes owned by Bob, 
is ``close" to the true distribution $p_y$ on these nodes (and vice versa for $q_y$ and $p_x$ on the odd nodes). 
To see this, recall the equivalent interpretation of mutual information in terms of KL-divergence:
\begin{align}\label{eq_ic_div_pi}
& I = I(\Pi ; X | Y) + I(\Pi ; Y | X) = \E_{(x,y)\sim \mu } \left[\Div{\pi_{xy}}{\pi_y} + \Div{\pi_{xy}}{\pi_x} \right] \nonumber \\
& = \E_{x,y,\ell \sim \pi_{x,y}} \left[ \log\frac{\pi_{xy}(\ell)}{\pi_y(\ell)}  \; + \log\frac{\pi_{xy}(\ell)}{\pi_x(\ell)} \;  \right] 
 = \E_{x,y,\ell \sim \pi_{x,y}} \left[ \log\frac{p_x(\ell)}{q_y(\ell)}  \; + \log\frac{p_{y}(\ell)}{q_x(\ell)} \;  \right] ,
\end{align}
where the last transition follows from substituting the terms according to \eqref{eq_dist_pi_correct} and \eqref{eq_dist_pi_marginal}.
The above equation asserts that the typical log-ratio $p_x/q_y$ is at most $I$, and the same holds for  $p_y/q_x$. The following simple corollary
essentially follows from Markov's inequality\footnote{One needs to be slightly careful, since the log ratios can in fact be negative, 
while Markov's inequality applies only to non-negative random variables. However, it is well known that the contribution of the negative summands is bounded, 
see \cite{Bra12} for a complete proof.}, so we state it without a proof.

\begin{corollary} \label{lem_ratio_p_q}
Define the set of transcripts $B_\eps := \{ \ell  :  p_x(\ell) > 2^{(I+1)/\eps}\cdot q_y(\ell) \;\; \text{or} \;\;\;  p_y(\ell) > 2^{(I+1)/\eps}\cdot q_x(\ell) \;\; \}$.
Then $\pi_{x,y}(B_\eps) < \eps$.
\end{corollary}
 
The intuitive operational interpretation of the above claim is that, for almost all transcripts $\ell$, the following holds: If a \it uniformly random \rm point $\in[0,1]$ 
falls below $p_y(\ell)$, then the probability it falls below $q_x$ as well is $\gtrsim 2^{-I}$. This intuition gives rise to the following 
rejection sampling approach:
The players interpret the public random tape as a sequence of points $(\ell_i,\alpha_i,\beta_i)$, uniformly distributed in $\cU \times [0,1]\times [0,1]$, where $\cU= \{0,1\}^C$
is the set of all possible transcripts of $\pi$. Their goal will be to discover the first 
index $i^*$ such that $\alpha_{i^*} \leq p_x(\ell_{i^*})$ and $\beta_{i^*} \leq p_y(\ell_{i^*})$. Note that, by design, the probability that a random point $\ell_i$ satisfies these
conditions is precisely $p_x(\ell_{i})\cdot p_y(\ell_{i}) = \pi_{xy}(\ell_i)$, and therefore $\ell_{i^*}$ has the correct distribution. 

The players consider only the first $t := 2|\cU| \ln (1/\eps)$ points of the public tape, as the probability that a single node satisfies the desirable condition is exactly $1/|\cU|$, 
and thus by independence of the points,  the probability that $i^* > t$ is at most $\left(1-1/|\cU|\right)^t = \eps^2 < \eps/16$).

To do so, each player defines his own set of ``potential candidates" for the index $i^*$. Alice defines the set
\[ \cA := \{ i< T \; : \; \alpha_{i} \leq p_x(\ell_{i}) \;\; \text{and} \;\; \beta_{i} \leq 2^{8I/\eps}\cdot q_x(\ell_{i}) \}. \]
Thus $\cA$ is the set of transcript which have the correct distribution on the odd nodes (which Alice can verify by herself), and ``approximately"  satisfies
the desirable condition on the even nodes, on which Alice only has a prior estimate ($q_x$).
Similarly, Bob defines
\[ \cB := \{ i< t \; : \; \beta_{i} \leq p_y(\ell_{i}) \;\; \text{and} \;\; \alpha_{i} \leq 2^{8I/\eps}\cdot q_y(\ell_{i}) \}. \]
By Corollary \ref{lem_ratio_p_q}, $\Pr[\ell^* \notin \cA \cap \cB] \leq \eps/8$, so for the rest of the proof we assume that $\ell^* \in \cA \cap \cB$. In fact, $\ell^*$ is the 
first element of $\cA \cap \cB$. Note that for each point $(\ell_i,\alpha_i,\beta_i)$, $\Pr[\ell_i \in \cA \cap \cB] \leq 2^{8I/\eps}/|\cU|$. Since we consider only the first 
$t = 2|\cU| \ln (1/\eps)$ points, this implies $\E[|\cA|] \leq 2^{8I/\eps}\cdot 2\ln(1/\eps)$, and Chernoff bound further asserts that
\[ \Pr[|\cA|>2^{10I/\eps}] \ll \eps/16. \]
Thus, if we let $\cE_1$ denote the event that $\ell^* \notin \cA \cap \cB$, and $\cE_2 := \{ i^* > t \; \text{or} \; |\cA|>2^{10I/\eps} \; \text{or} |\cB|>2^{10I/\eps} \;  \}$,
then by a union bound $\Pr[\cE_1 \cup \cE_2] \leq  2\eps/8+3\eps/16 < \eps/2$. Thus, letting $\tau_{x,y}$ denote the distribution of $\ell_{i^*} | \neg (\cE_1 \cup \cE_2)$,
the above implies \[ |\tau_{x,y} - \pi_{x,y}| \leq \eps/2, \]
as desired. We will now show a ($2$-round) protocol $\tau$ in which Alice and Bob output a leaf $\ell \sim \tau_{x,y}$, thereby completing the proof.
To this end, note we have reduced the simulation task to the problem of finding and outputting the first element in $\cA\cap\cB$, where $|\cA|\leq 2^{10I/\eps}$ 
and $|\cB|\leq 2^{10I/\eps}$.  The idea is simple:  Alice wishes to send her entire set $\cA$ to Bob, who can then check for intersection with his set $\cB$. 
Alas, explicitly sending each element $\ell\in \cA$ may be too expensive (requires $\log |\cU|$ bits),
so instead Alice will send Bob sufficiently many ($O(I/\eps)$) random hashes of the elements in $\cA$, using a publicly chosen  sequence of hash functions. 
Since for $a\in \cA$ and $b\in \cB$ such that $a\neq b$, the probability (over the choice of the hash functions) that $h_j(a)=h_j(b)$ for all $j\in O(I/\eps)$ is 
bounded by $2^{-O(I/\eps)} < \frac{\eps}{4|\cA|\cdot|\cB|}$, a union bound ensures that the probability there is an $a\in \cA$, $b\in \cB$ such that $a\neq b$ but the 
hashes happen to match, is bounded by $\eps/4$, which completes the proof. For completeness, the protocol $\tau$ is described in Figure \ref{figure:pi}.

\begin{figure}[h!tb]
\begin{tabular}{|l|}
\hline
\begin{minipage}{\algwidth}
\vspace{1ex}
\begin{center}
\textbf{The simulation protocol $\tau$}
\end{center}
\vspace{0.5ex}
\end{minipage}\\
\hline
\begin{minipage}{\algwidth}
\vspace{1ex}
\begin{enumerate}
    \item Alice computes the set $\cA$. If $|\cA| >2^{10 I/\eps}$ the protocol fails. \item Bob  computes the set $\cB$. If $|\cB| >2^{10 I/\eps}$ the protocol fails. 
    \item For each $a\in \cA$, Alice computes $d=\lceil 20 I/\eps + \log 1/\eps + 2\rceil$ random hash values $h_1(a),\ldots,h_d(a)$, where the hash functions are 
    evaluated using public randomness. 
    \item Alice sends the values $\{h_j(a_i)\}_{a_i\in \cA,~1\le j\le d}$ to Bob. 
    \item Bob finds the first index $i$ such that there is a $b\in \cB$ for which $h_j(b)=h_j(a_i)$ for $j=1..d$ (if such an $i$ exists). Bob outputs $\ell_b$ and sends the index $i$ to Alice. 
    \item Alice outputs $\ell_{i}$. 
\end{enumerate}
\vspace{0.3ex}
\end{minipage}\\
\hline
\end{tabular}
\caption{A simulating protocol for sampling a transcript of $\pi(x,y)$ using $2^{O(I/\eps)}$ communication.}\label{figure:pi}
\end{figure}

\end{proof}


\section{Concluding Remarks and Open Problems} \label{sec_openprobs}

We have seen that direct sum and product theorems in communication complexity are essentially equivalent to determining the best possible 
interactive compression scheme. Despite the exciting progress described in this survey, this question is still far from settled, and
the natural open problem is closing the gap in \eqref{eq_ds_current_gap}. The current frontier is trying to improve the dependence 
on $C$ over the scheme of \cite{BBCR}, even at a possible expense of increased dependence on the information cost:
\begin{openproblem}[Improving compression for internal information] \label{op_prob:compression}
Given a protocol $\pi$ over inputs $x,y\sim \mu$, with $\|\pi\|=C, \ICprot{\pi}{\mu} = I$,
is there a communication protocol $\tau$ which  $(0.01)$-simulates $\pi$ such that  $\|\tau \| \leq \poly(I)\cdot C^{1/2 - \eps}$, 
for some absolute positive constant $0<\eps<1/2$?
\end{openproblem}
In fact, by a recent result of Braverman and Weinstein \cite{BW14}, even a much weaker compression scheme in terms of $I$, namely 
$g(I,C) \leq 2^{o(I)}\cdot C^{1/2-\eps}$ would already improve over the the state of the art compression scheme ($\tilde{O}(\sqrt{C\cdot I})$) and 
would imply new direct sum and product theorems. 

Another interesting direction which was unexplored in this survey, is closing the (much smaller) gap in \eqref{eq_bbcr_2}, i.e, 
determining whether a logarithmic dependence on $C$ is essential for interactive compression with
respect to the \emph{external information cost} measure. 

\begin{openproblem}[Closing the gap for external compression] \label{op_prob:compression}
Given a protocol $\pi$ over inputs $x,y\sim \mu$, with $\|\pi\|=C, \ICprotE{\pi}{\mu} = I$,
is there a communication protocol $\tau$ which  $\delta$-simulates $\pi$ such that  $\|\tau\| \leq \poly(I)\cdot o(\log(C))$?
\end{openproblem}
It is believed that the $(\log C)$ factor is in fact necessary  (see e.g., that candidate separation sampling problem suggested in \cite{Bra13Allerton}), 
but this conjecture remains to be proved.  \\

Recall that in Section \ref{subsec_dp} we saw direct product theorems for randomized communication complexity,
asserting a lower bound on the success rate of  computing $n$ independent copies of $f$ in terms of the success of a single copy. 
When $n$ is very large, such theorems can be superseded by trivial arguments, since $f^n$ must require at least $n$ bits of 
communication just to describe the output. 
One could hope to achieve hardness amplification without blowing
up the output size -- a classical example is Yao's XOR lemma in circuit complexity. In light of the state-of-the-art direct product result,
we state the following conjecture:
\begin{openproblem}[A XOR Lemma for communication complexity] \label{op_prob:compression}
Is it true that for any $2$-party function $f$ and any distribution $\mu$ on $\cX\times\cY$,
$$ \Dmu{f^{\oplus n}}{\mu^n}{1/2+ e^{-\Omega(n)} } = \tilde{\Omega}(\sqrt{n})\cdot \Dmu{f}{\mu}{2/3} ?$$
(here $f^{\oplus n}((x_1,y_1),\ldots , (x_n,y_n)) := f(x_1,y_1)\oplus ....\oplus f(x_n,y_n)$).
\end{openproblem}

We remark that the ``direct-sum"  analogue of this conjecture is true: \cite{BBCR} proved that their direct sum result for $f^n$
can be easily extended to the computation of  $f^{\oplus n}$,  showing (roughly) that 
$\Dmu{f^{\oplus n}}{\mu^n}{3/4} = \tilde{\Omega}(\sqrt{n})\cdot \Dmu{f}{\mu}{2/3}$. However, this conversion technique does not 
apply to the direct product setting. \\

\section*{Acknowledgements}
I would like to thank Mark Braverman and Oded Regev for helpful discussions and insightful comments on an earlier draft of this survey.
\bibliographystyle{alpha}
\bibliography{refs}


\end{document}